\newcommand{\TwoChoices}{\mbox{\textsc{2-Choices}\xspace}}
\newcommand{\voter}{\mbox{\textsc{Voter}\xspace}}
\newcommand{\MaxLPA}{\mbox{\textsc{Max-LPA}\xspace}}
\newcommand{\LPA}{\mbox{\textsc{LPA}\xspace}}
\newcommand{\Ex}[1]{\mathbf{E}\left[ #1 \right]}
\newcommand{\cond}{\ \middle| \ }
\newcommand{\bone}{\mathds{1} }
\newcommand{\bx}{\mathbf{x} }
\newcommand{\bc}[1]{\mathbf{c}^{(#1)}}
\newcommand{\bb}{\mathbf{\bone_{B}} }
\newcommand{\bv}{\mathbf{v} }
\newcommand{\bw}{\mathbf{w} }
\newcommand{\Prob}[2]{\mathbf{P}_{#1} \left( #2 \right)}
\newcommand{\norm}[1]{\Vert #1 \Vert}
\newcommand{\bigO}{\mathcal{O}}
\newtheorem{definition}{Definition}
\newtheorem{theorem}{Theorem}
\newtheorem{lemma}{Lemma}
\newtheorem{corollary}{Corollary}
\newtheorem{claim}{Claim}
\begin{document}
\title{Distributed Community Detection via\\ Metastability of the 2-Choices Dynamics}

\author{Emilio Cruciani}
\affil{%
 Gran Sasso Science Institute, L'Aquila, Italy\authorcr
 \texttt{emilio.cruciani@gssi.it}
}

\author{Emanuele Natale}
\affil{%
 Max Planck Institute for Informatics, Saarbr\"ucken, Germany
 \& Universit\'e C\^ote d'Azur, CNRS, I3S, Inria\authorcr
 \texttt{enatale@mpi-inf.mpg.de}
}

\author{Giacomo Scornavacca}
\affil{%
 University of L'Aquila, L'Aquila, Italy\authorcr
 \texttt{giacomo.scornavacca@graduate.univaq.it}
}

\date{}

\maketitle

\begin{abstract}
We investigate the behavior of a simple majority dynamics on networks of agents whose interaction topology exhibits a community structure. 
By leveraging recent advancements in the analysis of dynamics, we prove that, when the states of the nodes are randomly initialized, the system rapidly and stably converges to a configuration in which the communities maintain internal consensus on different states. 
This is the first analytical result on the behavior of dynamics for non-consensus problems on non-complete topologies, based on the first symmetry-breaking analysis in such setting. 

Our result has several implications in different contexts in which dynamics are adopted for computational and biological modeling purposes. 
In the context of \emph{Label Propagation Algorithms}, a class of widely used heuristics for \emph{community detection}, it represents the first theoretical result on the behavior of a distributed label propagation algorithm with quasi-linear message complexity.
In the context of \emph{evolutionary biology}, dynamics such as the Moran process have been used to model the spread of mutations in genetic populations~\cite{lieberman_evolutionary_2005};
our result shows that, when the probability of adoption of a given mutation by a node of the evolutionary graph depends super-linearly on the frequency of the mutation in the neighborhood of the node and the underlying evolutionary graph exhibits a community structure, there is a non-negligible probability for \emph{species differentiation} to occur.
\end{abstract}
\clearpage

\section{Introduction}

Dynamics are simple stochastic processes on networks, in which agents update their own state according to a symmetric function of the state of their neighbors and of their current state, with no dependency on time or on the topology of the network~\cite{mossel_opinion_2017,emanuele_natale_computational_2017}. 
In previous decades, in the context of automata networks, this kind of systems has been investigated from a computability point of view, attracting the interest of mathematicians and physicists. 
Recently it has been subject to a renewed interest from computer scientists, as new techniques for analyzing this class of processes have made possible to answer questions regarding their efficiency and capability as distributed algorithms~\cite{Doerr:2011,becchetti_plurality_2015,becchetti_stabilizing_2016,becchetti_simple_2017,cooper_fast_2015,cooper_fast_2016}. 

In this work we consider the \TwoChoices{} dynamics (Definition~\ref{def:2choices}), in which at each discrete-time step each agent samples two random neighbors with replacement and, if the two have the same state, the agent adopts that state. 
The process rapidly converges to \emph{consensus}, i.e., a configuration where all agents have the same state, if the proportion of agents supporting one state exceeds a given function of the second eigenvalue of the graph~\cite{cooper_fast_2015,cooper_fast_2016}.
Their proofs leverage an interesting property of the \TwoChoices{} dynamics, i.e., that the expected number of agents supporting one state can be expressed as a quadratic form of the transition matrix of a simple random walk on the underlying graph. 
This fact allows to relate the behavior of the process to the eigenspaces of the graph. 

Motivated by questions arising in \emph{graph clustering} and \emph{evolutionary biology}, we exploit the aforementioned relation to show a more fine-grained understanding of the \emph{consensus} behavior of the \TwoChoices{} dynamics. 
Our new analysis combines symmetry-breaking techniques~\cite{becchetti_stabilizing_2016,clementi_tight_2017} and concentration of probability arguments with a linear algebraic approach~\cite{cooper_fast_2015,cooper_fast_2016} to obtain the first symmetry-breaking analysis for dynamics on non-complete topologies.

\vspace{1em}
\fbox{\parbox{0.9\columnwidth}{
\textbf{Informal description of Theorem~\ref{theorem:main}.}
Let the agents of a network initially pick a random binary state and then run the \TwoChoices{} dynamics.
If the network has a \emph{community structure} there is a significant probability that it will rapidly converge to an \emph{almost-clustered} configuration, where almost all nodes within each community share the same state, but the predominant states in the communities are different. 
In other words, with constant probability, after a short time the states of the nodes constitute a labeling which reveals the clustered structure of the network.}}
\vspace{1em}

The aforementioned probability for the labeling to reveal the community structure can be amplified via \emph{Community-Sensitive Labeling}~\cite{becchetti_simple_2017},
transforming the \TwoChoices{} dynamics into a \emph{distributed label propagation algorithm} with quasi-linear message complexity.

We remark that, because of the stochastic and time-independent behavior of the \TwoChoices{} dynamics, the process eventually leaves almost-clustered configurations and reaches a \emph{monochromatic} configuration in which all agents have the same state. 
However, before that happens, we prove that the process remains in almost-clustered configurations for a time equal to a large-degree polynomial in $n$. 
Hence, the event that the process leaves the almost-clustered configuration is negligible for most practical applications. 
This key transitory property of some stochastic processes, 
called \emph{metastability}~\cite{auletta_metastability_2012,ferraioli_metastability_2015}, 
has recently attracted a lot of attention in the Theoretical Computer Science community.

\subsection{Label Propagation Algorithms}

\emph{Label Propagation Algorithms} (LPAs) are a widely used class of algorithms used for \emph{community detection} and inspired by epidemic processes on networks.
The generic pattern of such algorithms can be described as follows:
First, a label taken from a finite set is assigned to each node according to some \emph{initialization rule};
then the nodes are activated following some \emph{activation rule};
active nodes interact with their neighbors and update their labels according to some \emph{local majority-based update rule}.

After the first algorithm, known in literature as \LPA{}, has been proposed and its effectiveness empirically assessed~\cite{raghavan2007near}, a new line of research started with the goal of improving the quality of the detected communities and the efficiency of the algorithm~\cite{leung_towards_2009,liu_advanced_2010,boldi_layered_2011,vsubelj2011robust,vsubelj2011unfolding,xie_labelrank_2013,zhang2017label}, and to investigate more general settings, e.g., dynamic networks~\cite{xie2013labelrankt,clementi_distributed_2013}.
Many variants with small variations on initialization rule, activation rule, and local update rule have been proposed, but they have only been validated experimentally.
On the other hand, there exist only few theoretical works. 
One shows the equivalence of \LPA{} with finding the minima of a generalization of the Ising model, used in statistical mechanics to describe the spin interaction of electrons on a crystalline lattice~\cite{tibely_equivalence_2008}.
Another is the first and only rigorous analysis of a variant of \LPA{} on the \emph{Stochastic Block Model}\footnote{The \emph{Stochastic Block Model} is a generative model for random graphs, that produces graphs with community structure.}~\cite{kothapalli2013analysis}:
They propose \MaxLPA{}, i.e., a synchronous version of LPA that follows a deterministic majority rule, and analyze its behavior on $\mathcal{G}_{2n,p,q}$ graphs with parameters $p=\Omega(n^{-\nicefrac{1}{4}+\varepsilon})$ and $q=\bigO(p^2)$, i.e., on graphs that present very dense communities of constant diameter separated by a sparse cut.

The absence of substantial theoretical progress in the analysis of LPAs is largely due to the lack of techniques for handling the interplay between the non-linearity of the local update rules and the topology of the graph. 
In this work we look at the \TwoChoices{} dynamics as a \emph{distributed label propagation algorithm}. 
The randomized nature of the \TwoChoices{} dynamics introduces a major challenge with respect to deterministic rules such as the one of \MaxLPA{}.

\subsubsection{Comparison with our result.}
Let $a$ and $b$ respectively be the number of neighbors of each agent in its own community and in the other community; let $d:=a+b$.
The analysis of \MaxLPA{}~\cite{kothapalli2013analysis} essentially requires $a \geq n^{\nicefrac{3}{4}-\varepsilon}$ and $b\leq c a^2 / n$, for some arbitrary constants $\varepsilon$ and $c$.
Our analysis requires\footnote{$\lambda$ is the maximum eigenvalue, in absolute value and different from $1$, of the transition matrices of the subgraphs induced by the communities.} $\lambda \leq n^{-\nicefrac{1}{4}}$, which implies $a \geq n^{\nicefrac{1}{2}}$ because of the extremality of Ramanujan graphs, and $b/d \leq n^{-\nicefrac{1}{2}}$.
Compared to the analysis of \MaxLPA{}, Theorem~\ref{theorem:main} holds for much sparser communities at the price of a stricter condition on the cut.
Moreover, given the distributed nature of the two algorithms, \MaxLPA{} has a message complexity of $\Omega(m)$, with $m$ the number of edges in the graph that is at least $n^{\nicefrac{7}{4}}$;
instead, the message complexity of the \TwoChoices{} dynamics is $\bigO(n \log n)$ regardless of the actual density of the edges on the graph, since the local update rule only looks at 2 labels.
Our algorithm performs an implicit \emph{sparsification} of the graph, an interesting property for the design of sparse clustering algorithms~\cite{sun_distributed_2017}, in particular for opportunistic network settings~\cite{becchetti_average_2017}. 

\subsection{Evolutionary dynamics}

\begin{figure*}[th!]
\centering
\includegraphics[width=0.95\textwidth]{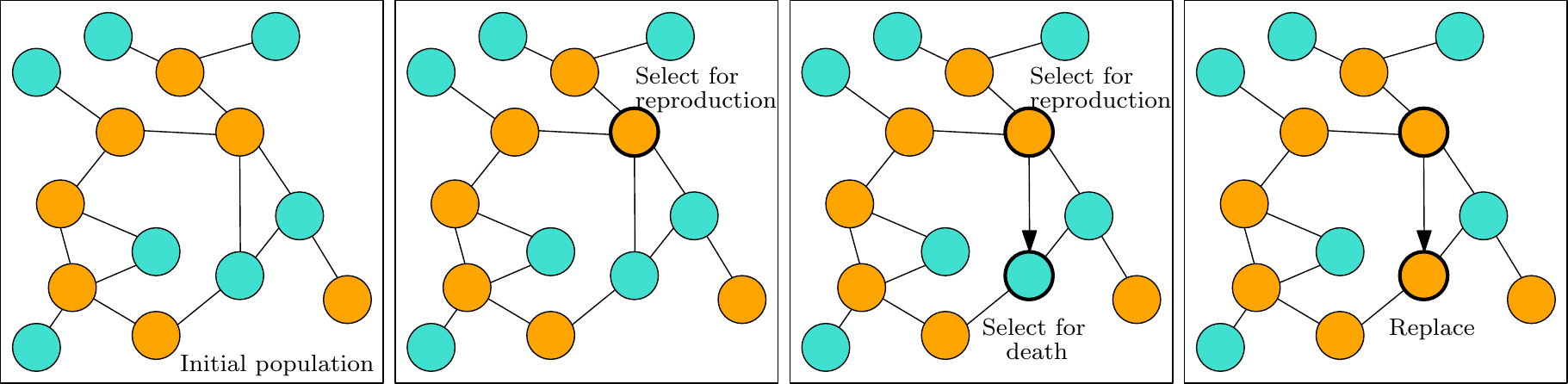}
\caption{Visual representation of the \emph{Moran process} (adapted from \cite{lieberman_evolutionary_2005}).
At each time step an individual is randomly chosen for reproduction according to its \emph{fitness}, 
and a second  individual adjacent to it is randomly chosen for death; 
the offspring of the first individual then replaces the second. 
When the underlying network is regular, the process is equivalent to the \voter{} dynamics~\cite{berenbrink_bounds_2016}. 
}
\label{fig:moran}
\end{figure*}

\emph{Evolutionary dynamics} is the branch of genetics which studies how populations evolve genetically as a result of the interactions among the individuals~\cite{durrett_by_2011}. 
The study of evolutionary dynamics on graphs started with the investigation of the \emph{fixation probability} of the \emph{Moran process} (Figure~\ref{fig:moran}) on different families of graphs, namely the probability that a new mutation with increased fitness eventually spreads across all individuals in the population~\cite{lieberman_evolutionary_2005}. 
The Moran process has since then attracted the attention of the computer science community due to the algorithmic questions associated to its fixation probability~\cite{giakkoupis_amplifiers_2016,galanis_amplifiers_2017}. 

However, no simple dynamics has been proposed so far in the context of evolutionary graph theory for explaining one of evolution's fundamental phenomena, namely \emph{speciation}~\cite{results_speciation_2004}. 
Two fundamental classes of driving forces for speciation can be distinguished: \emph{allopatric speciation} and \emph{sympatric/parapatric speciation}. 
The former, which refers to the divergence of species resulting from geographical isolation, is nowadays considered relatively well understood~\cite{savolainen_sympatric_2006}; 
on the contrary, the latter, namely divergence without complete geographical isolation, is still controversial~\cite{savolainen_sympatric_2006,bolnick_sympatric_2007}.
In several evolutionary settings the spread of a mutation appears nonlinear with respect to the number of interacting individuals carrying the mutation, exhibiting a drift towards the most frequent phenotypes~\cite{results_speciation_2004}. 
In this work we look at the \TwoChoices{} dynamics as a quadratic evolutionary dynamics on a clustered graph representing sympatric and parapatric scenarios.
We regard the random initialization of the \TwoChoices{} process as two inter-mixed populations of individuals with different genetic pools. 
The interactions for reproduction purposes between the two populations can be categorized in frequent interactions among individuals within an equal-size bipartition of the populations, i.e., the \emph{communities}, and less frequent interactions between these two communities which, in later stages of the differentiation process, may be interpreted as genetic admixture, i.e. interbreeding between two genetically-diverging populations~\cite{martin_genome-wide_2013}.

Within the aforementioned framework our Theorem~\ref{theorem:main} provides an analytical evolutionary graph-theoretic proof of concept on how speciation can emerge from the simple nonlinear underlying dynamics of the evolutionary process at the population level. 

\subsection{Computational dynamics}

Dynamics are rules to update an agent's state according to a function which is invariant with respect to time, network topology, and identity of an agent's neighbors, and whose arguments are only the agent's current state and those of its neighbors~\cite{mossel_opinion_2017,emanuele_natale_computational_2017}.
Simple models of interaction between pairs of nodes in a network have been studied since the first half of the 20th century in statistical mechanics~\cite{liggett_interacting_2012} and in the second half in diverse sciences, such as economics and sociology, where averaging-based opinion dynamics such as the DeGroot model have been investigated~\cite{french_formal_1956,degroot_reaching_1974,jackson_social_2010}.
The first study in computer science of a dynamics from a computational point of view is that of a synchronous-time version of the \voter{} dynamics, where, in each discrete-time round, each node looks at a random neighbor and copies its opinion~\cite{hassin_distributed_2001}. 
The \voter{} dynamics can be regarded as the simplest dynamics, in the sense that there is arguably no simpler rule by which nodes may meaningfully update their state as a function of their neighbors' states.
Examples of other dynamics are: \textsc{Undecided-State}~\cite{clementi_tight_2017}, \textsc{3-Majority}~\cite{becchetti_simple_2017,becchetti_stabilizing_2016}, \textsc{2-Median}~\cite{Doerr:2011}, \textsc{Averaging}.
The \textsc{Averaging} dynamics has been employed for solving the Community Detection task~\cite{becchetti_find_2017}. 
However, we remark that the resulting protocol is not classifiable within LPAs: The configuration space in not described in terms of the finite set of labels initially used by nodes, but by rational values generated from the averaging update rule.
Other examples of problems for which dynamics have been successfully employed in order to design an efficient solution are Noisy Rumor Spreading~\cite{fraigniaud_noisy_2016}, Exact Majority~\cite{mertzios2017determining}, and Clock Synchronization~\cite{boczkowski_minimizing_2017}.

We now focus on the \TwoChoices{} dynamics, which is the subject of this work. 
It can arguably be considered the simplest type of dynamics after the \voter{} dynamics and, until now, it constitutes one of the few processes whose behavior has been characterized on non-complete topologies~\cite{cooper_power_2014,cooper_fast_2015,cooper_fast_2016}.
It has been proven that a network of agents, each with a binary state, will support the initially most frequent opinion with high probability after a polylogarithmic number of rounds whenever the initial \emph{bias} (the advantage of a state on the other) is greater than a function of the network's \emph{expansion}~\cite{cooper_power_2014}.
Such result was later refined with milder assumptions on the initial bias with respect to the network's expansion~\cite{cooper_fast_2015} and generalized to more opinions~\cite{cooper_fast_2016}.
Moreover, in core-periphery networks, depending on the strength of the cut between the core and the periphery, a phase-transition phenomenon occurs~\cite{AAMAS2018}: 
Either one of the colors rapidly spreads over the rest of the network, or a metastable phase takes place, in which both the colors coexist in the network for superpolynomial time.

\section{Notation}

Let $G=(V, E)$ be a $(2n,d,b)$-\emph{clustered regular graph} (Definition~\ref{def:clustered-regular-graph}) and let us define $a := d-b$.
Note that $G$ is composed by two $a$-regular communities connected by a $b$-regular cut (Figure~\ref{fig:regularclustered}) and that when $a > b$ the graph $G$ exhibits a well-clustered structure, i.e., each node has more neighbors in its community than in the other one.
\begin{figure}
    \centering
    \includegraphics[width=0.95\columnwidth]{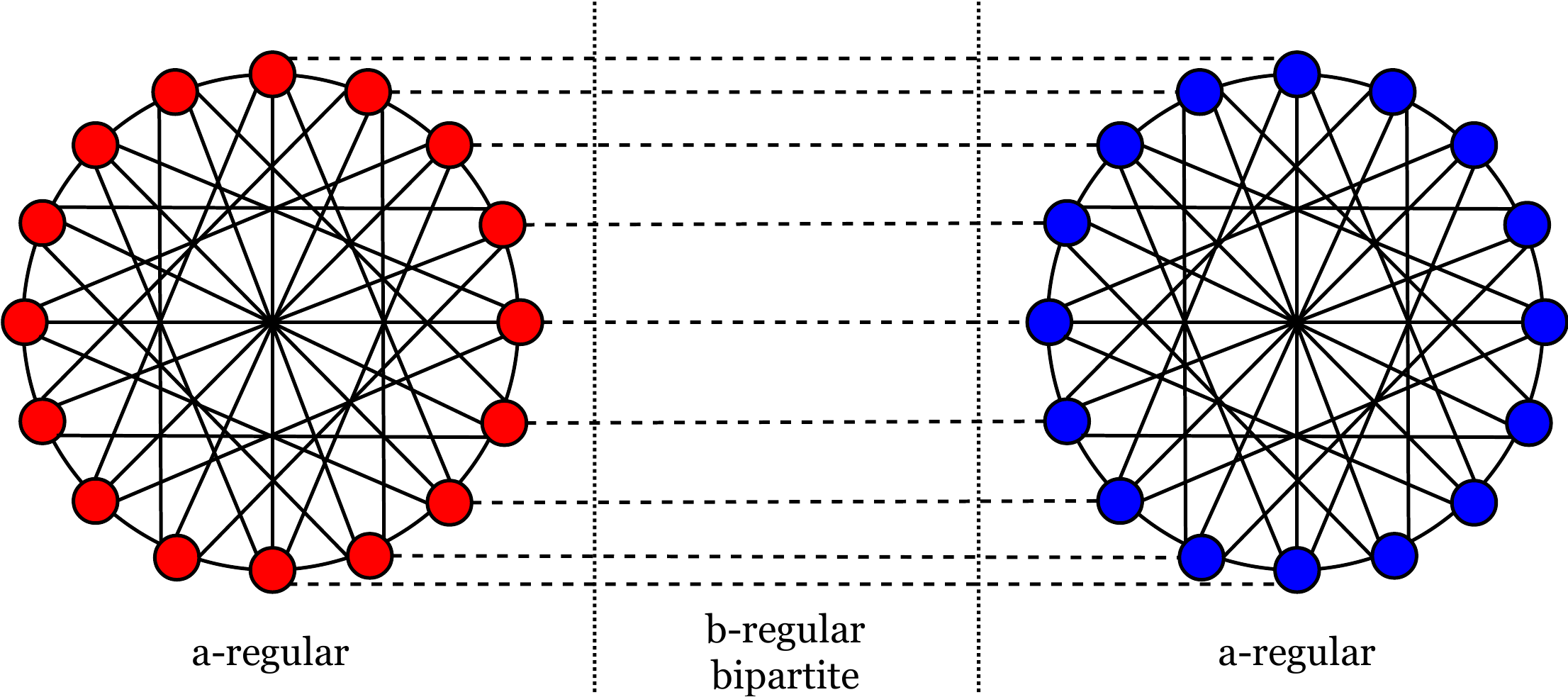}
    \caption{Representation of a $(2n,d,b)$-clustered regular graph where $a:=d-b$. Each community induces an $a$-regular graph while the cut between the two communities induces a $b$-regular bipartite graph.}
    \label{fig:regularclustered}
\end{figure}
\begin{definition}[Clustered regular graph]
    \label{def:clustered-regular-graph}
    A $(2n, d, b)$-clustered regular graph~\cite{becchetti_find_2017} is a graph $G = (V, E)$ such that:
    \begin{itemize}
      \item $V = V_1 \cup V_2$, $V_1 \cap V_2 = \emptyset$, and $|V_1| = |V_2| = n$;
      \item every node has degree $d$; 
      \item every node in $V_1$ has exactly $b$ neighbors in $V_2$ and every node in $V_2$ has exactly $b$ neighbors in $V_1$.
    \end{itemize}
\end{definition}

Each node of $G$ maintains a binary \emph{state} that we represent as a color: either \emph{red} or \emph{blue}.
We denote the vector of states of all nodes in $G$ at time $t$ as the \emph{configuration} vector $\bc{t}$
and we refer to the state of a node $u \in V$ at time $t$ as $\bc{t}_u \in \{red,\, blue\}$.
We call $B^{(t)}$ the set of nodes colored \emph{blue} at time $t$ and $R^{(t)}$ the set of nodes colored \emph{red} at time $t$. 
For each community $i \in \{1,2\}$ we define $B^{(t)}_i := V_i \cap B^{(t)}$ and $R^{(t)}_i := V_i \cap R^{(t)}$. 
We call $s^{(t)}_i = \vert R^{(t)}_i \vert - \vert B^{(t)}_i \vert$ the \emph{bias} in community $i$ toward color \emph{red}.
Given some initial configuration $\bc{0}$, we let the nodes of $G$ run the following \TwoChoices{} dynamics.

\begin{definition}[2-Choices dynamics]
    \label{def:2choices}
    The \TwoChoices{} dynamics is a local synchronous protocol that works as follows: In each round, each node $u$ chooses two neighbors $v, w$ uniformly at random with replacement; if $v$ and $w$ support the same color, then $u$ updates its own color to their color, otherwise $u$ keeps its previously supported color.
\end{definition}

Note that the random sequence of configurations $\{\bc{t}\}_{t \in \mathbb{N}}$ generated by multiple iterations of the \TwoChoices{} dynamics on $G$ is a Markov Chain with two absorbing states, namely the configurations where all the nodes support the same color, either \emph{red} or \emph{blue}.

Let us now introduce the notion of \emph{almost-clustered configuration}.
\begin{definition}[Almost-clustered configuration]
    \label{def:almost-clustered-config}
    A configuration $\bc{t}$ is \emph{almost-clustered} if 
    \[\textstyle
        \vert s_i \vert  \geq n - \bigO \left(\frac{\log n}{\log \log n}\right)
    \]
    for each $i \in \{1,2\}$ and the sign of the biases is different, i.e., $s_1 s_2 < 0$.
\end{definition}

Intuitively, \emph{almost-clustered} configurations are such that the vast majority of the nodes in one community is supporting one of the two colors, and the vast majority of nodes in the other community is supporting the other color.

In the rest of the section we introduce the notation used to describe the spectral properties of the transition matrix of the underlying graph $G$: 
The analysis in expectation of the process (Lemma~\ref{lemma:expectation}) exploits such spectral properties and
our main result (Theorem~\ref{theorem:main}) makes assumptions on the spectrum of the transition matrix of $G$.

Let $P=\frac{1}{d}A$ be the transition matrix of a simple random walk on $G$, where we denote with $d$ the degree of the nodes and with $A$ the adjacency matrix of $G$.
Note that the transition matrix $P$ can be decomposed as follows:
\begin{multline*}\textstyle
P =
\left(
\begin{array}{cc}
    P_{1,1} & P_{1,2} \\
    P_{2,1} & P_{2,2} 
\end{array}
\right)
= A + B = 
\left(
\begin{array}{cc}
    P_{1,1} & 0 \\
    0 & P_{2,2} 
\end{array}
\right)
+ \left(
\begin{array}{cc}
    0 & P_{1,2} \\
    P_{2,1} & 0
\end{array}
\right),
\end{multline*}
where $A$ is the transition matrix of the communities if we disconnect them,
while $B$ is the transition matrix of the bipartite graph connecting the two communities.
Note that since the cut is regular $B$ is symmetric and $P_{1,2}^\intercal = P_{2,1}$.

We denote with $\lambda_1 \geq \ldots \geq \lambda_n$ the eigenvalues of the transition matrix of the subgraph induced by the first community $\bar{P}_{1,1} := \frac{d}{a} P_{1,1}$ and with $\bv_1, \ldots, \bv_n$ their corresponding eigenvectors;
we denote with $\mu_1 \geq \ldots \geq \mu_n$ the eigenvalues of the transition matrix of the subgraph induced by the second community $\bar{P}_{2,2} := \frac{d}{a} P_{2,2}$ and with $\bw_1, \ldots, \bw_n$ their corresponding eigenvectors.
Since both $\bar{P}_{1,1}$ and $\bar{P}_{2,2}$ are stochastic matrices we have that 
$\lambda_1 = \mu_1 = 1$ and that $\bv_1 = \bw_1 = \frac{1}{\sqrt{n}} \bone$, where $\bone$ is the vector of all ones.
We consider the case in which both the subgraphs induced by the communities are connected and not bipartite; 
thus it holds that $\lambda_2 < 1$, $\mu_2 < 1$ and that $\lambda_n > -1$, $\mu_n > -1$.

We define $\lambda := \max(|\lambda_2|, |\lambda_n|, |\mu_2|, |\mu_n|)$. 
The value of $\lambda$ is a representative of the second largest eigenvalues for both the subgraphs induced by the communities and is closely related to the third largest eigenvalue of $P$.

In addition to the analysis in expectation, we also provide concentration bounds for the behavior of the process.
In this context, we say that an event $\mathcal{E}$ happens \emph{with high probability} (for short, \emph{w.h.p.}) if $\Prob{}{\mathcal{E}} \geq 1 - \bigO(n^{-\gamma})$, for some constant $\gamma>0$.

\section{Analysis of the \TwoChoices{} dynamics}
\label{sec:analysis}

In this section we give a high-level overview of the main steps and ideas used for the analysis of the process.

Let $G$ be a clustered regular graph (Definition~\ref{def:clustered-regular-graph}). 
Let each node in $G$ initially pick a color $\bc{0}_u \in \{red,\, blue\}$ uniformly at random and independently from the other nodes. 
Then let the nodes of $G$ run the \TwoChoices{} dynamics (Definition~\ref{def:2choices}).

The variance in the initialization suggests that with some constant probability the distribution of the two colors will be slightly asymmetric w.r.t.\ the two communities, i.e., the first community will have a bias toward a color, while the second community will have a bias toward the other color.
Without loss of generality, we consider the case in which $s_1$ is positive and $s_2$ is negative, i.e., the first community is unbalanced toward color \emph{red} while the second community is unbalanced toward color \emph{blue}. 

Roughly speaking, we show that when the initialization is ``lucky'', i.e., the biases in the two communities are toward different colors, there is a significant probability that the process will rapidly make the distribution more and more asymmetric until converging to an \emph{almost-clustered} configuration (Definition~\ref{def:almost-clustered-config}), i.e., a configuration in which, apart from a small number of outliers, the nodes in the two communities support different colors.
This behavior of the \TwoChoices{} dynamics is formalized in the following theorem.
\begin{theorem}[Constant probability of clustering]
    \label{theorem:main}
    Let $G = (V, E)$ be a connected $(2n,d,b)$-clustered regular graph such that
    $\frac{b}{d} = \bigO(n^{-1/2})$ and $\lambda = \bigO(n^{-1/4})$.
    Let $c \in \mathbb{N}$ be any constant;
    let us define the two following events about the \TwoChoices{} dynamics on $G$:
    \begin{quote}
    $\xi$: ``Starting from a \textit{random initialization} the process reaches an \emph{almost-clustered} configuration within $\bigO(\log n)$ rounds.''
    \\
    $\xi_c$: ``Starting from an \emph{almost-clustered} configuration the process stays in \emph{almost-clustered} configurations for $n^c$ rounds.''
    \end{quote}
    For two suitable positive constants $\gamma_1$ and $\gamma_2$ it holds that
    \[
        \Prob{}{\xi} \geq \gamma_1 
        \,\text{ and }\,
        \Prob{}{\xi_c} \geq 1 - n^{-\gamma_2}.
    \]
\end{theorem}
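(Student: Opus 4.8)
The plan is to establish the two bounds separately, since $\xi$ governs the symmetry-breaking/amplification phase from a random start while $\xi_c$ governs the stability of an already clustered configuration. For $\Prob{}{\xi}\geq\gamma_1$ I would first argue that the random initialization is \emph{lucky} with constant probability: each initial bias $s_i^{(0)}$ is a sum of $n$ independent $\pm1$ variables, so by a central-limit / anti-concentration estimate the pair $(s_1^{(0)},s_2^{(0)})$ lands in the region $s_1^{(0)}=\Omega(\sqrt n)$, $s_2^{(0)}=-\Omega(\sqrt n)$ with probability at least a constant $\gamma_1$. Conditioned on this event I would track $(s_1,s_2)$ and show that the process inflates both biases, with signs preserved, until reaching an almost-clustered configuration (Definition~\ref{def:almost-clustered-config}) within $\bigO(\log n)$ rounds.

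For the drift I would use Lemma~\ref{lemma:expectation} to express $\Ex{s_i^{(t+1)} \cond \bc{t}}$ through the spectral decomposition of the community transition matrices. The mean-field behaviour of the \TwoChoices{} rule amplifies the majority: if a fraction $\tfrac12+\delta$ of a community is red then in expectation it becomes $\tfrac12+\tfrac32\delta+\bigO(\delta^2)$, i.e. the bias grows by a constant factor $>1$ per round, so a factor-$\tfrac32$ growth inflates $\sqrt n$ up to $n-\bigO(\log n/\log\log n)$ in $\bigO(\log n)$ rounds. Here $\lambda$ and $b/d$ enter only as additive error terms, of order $\lambda^2 n$ (the spectral dispersion of $\norm{P\bx}^2$, which deviates from its rank-one ``average'' part by at most $\lambda^2\norm{\bx}^2$) and $(b/d)\,n$ (the pull exerted through the cut by the opposite community). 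The hypotheses $\lambda=\bigO(n^{-1/4})$ and $b/d=\bigO(n^{-1/2})$ are precisely what forces both error terms to be $o(\sqrt n)$, hence negligible against the initial signal $\Omega(\sqrt n)$. I would then promote the expected multiplicative growth to a high-probability statement by a bounded-difference/martingale concentration argument applied round by round.

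For $\Prob{}{\xi_c}\geq 1-n^{-\gamma_2}$ the key observation is self-healing. Conditioned on an almost-clustered $\bc{t}$, a node in the bulk of its community sees a majority fraction $1-\bigO(b/d+k/a)$, where $k=\bigO(\log n/\log\log n)$ is the number of internal outliers; it therefore adopts the minority colour only by sampling two minority neighbours, an event of probability $\bigO((b/d)^2)=\bigO(n^{-1})$ (up to lower-order terms), while an existing outlier reverts with probability $1-\bigO(n^{-1})$. Crucially, given $\bc{t}$ the new node states are \emph{conditionally independent}, so the number of outliers at time $t+1$ is a sum of $n$ independent indicators of mean $\tilde\bigO(n^{-1})$, \emph{uniformly over all almost-clustered $\bc{t}$}. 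A factorial-moment (Poisson-tail) bound then gives $\Prob{}{\#\text{outliers}>c'\,\tfrac{\log n}{\log\log n}}\leq n^{-(c+\gamma_2+1)}$ for a suitable $c'$, since $(\log n/\log\log n)!=n^{\Theta(1)}$ dominates the numerator; iterating this one-step estimate along the trajectory and taking a union bound over the $n^c$ rounds yields the claim. Because the outlier count never approaches $n/2$, the biases keep opposite signs and the configuration stays almost-clustered throughout.

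The main obstacle is the amplification phase behind $\xi$. While the bias is only $\Theta(\sqrt n)$ its relative fluctuations are large, so the concentration must be sharp enough that the error accumulated over $\bigO(\log n)$ rounds never swamps the initially tiny signal; this is exactly where the spectral error $\lambda^2 n$, the cut error $(b/d)n$, and the signal $\sqrt n$ must be balanced, and it is the reason the thresholds $n^{-1/4}$ and $n^{-1/2}$ arise. By comparison, the metastability bound for $\xi_c$ is comparatively routine once the self-healing/conditional-independence property is in hand.
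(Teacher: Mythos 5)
Your overall architecture is the same as the paper's (Berry--Esseen for a lucky start, a spectral drift estimate in the spirit of Lemma~\ref{lemma:expectation}, per-round concentration, and conditional independence plus a Poisson-type tail for metastability), but two of your steps fail quantitatively as you state them.

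First, the amplification phase. When $s_i = \Theta(\sqrt n)$, any Chernoff or bounded-difference estimate gives a per-round failure probability of $e^{-\Theta(s_i^2/n)} = e^{-\Theta(1)}$, a constant bounded away from $0$; no martingale argument can ``promote the expected multiplicative growth to a high-probability statement'' in this regime, because that statement is simply false there. The paper's Lemma~\ref{lemma:clustered-breaking-symmetry} instead multiplies the per-round success probabilities along the trajectory: conditioned on the first $i-1$ rounds succeeding, the bias is at least $h\sqrt n\,(1+1/16)^{i-1}$, round $i$ succeeds with probability $1-e^{-\alpha\beta^{2i}}$, and the infinite product $\prod_{i\ge 1}(1-e^{-\alpha\beta^{2i}})$ is bounded below by a constant because the failure terms decay doubly exponentially. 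This is precisely why $\Prob{}{\xi}$ is only a constant, and why the analysis is split into a constant-probability phase (up to $s_i \geq \sqrt n\log n$, Lemma~\ref{lemma:clustered-breaking-symmetry}) followed by a w.h.p.\ phase (Lemma~\ref{lemma:global-convergence}). Relatedly, with $\lambda = \Theta(n^{-1/4})$ the spectral error term is $\Theta(\sqrt n)$ --- the \emph{same} order as the initial signal, not $o(\sqrt n)$ as you claim; the argument survives only because the lucky initialization is required to yield $s_i \geq h\sqrt n$ for the specific constant $h = 4(2\sqrt 2 c_1 + c_2^2)$, chosen to dominate the error constants.

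Second, the metastability phase, which you call routine, is where your numbers actually break. Your uniform per-node flip probability is $\bigO\left((b/d + k/a)^2\right)$ with $k = \bigO(\log n/\log\log n)$ outliers; since the hypotheses allow $d,a = \Theta(\sqrt n)$, this is $\Theta\left(\frac{\log^2 n}{n\log^2\log n}\right)$ per node, so summed over $n$ nodes it bounds the expected number of new outliers only by $\Theta(\log^2 n/\log^2\log n)$, which already exceeds the almost-clustered threshold $\bigO(\log n/\log\log n)$. Your factorial tail bound then cannot close: for a polylogarithmic mean $\mu$, the quantity $\mu^t/t!$ with $t = c'\log n/\log\log n$ has $\mu^t = e^{t\,\Theta(\log\log n)} = n^{\Theta(c')}$, comparable to $t! = n^{c'(1-o(1))}$, so the ratio is not $n^{-(c+\gamma_2+1)}$. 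What rescues the claim in the paper (the claims inside Lemma~\ref{lemma:metastability}) is a non-uniform, aggregate accounting: the $k$ minority nodes have only $ak$ edges into their community, so in the worst case only about $a$ nodes can see $\approx k$ minority neighbours, whence $\sum_u p_u \leq a\left(\frac{k+b}{d}\right)^2 + n\left(\frac{b}{d}\right)^2 = \bigO(1)$. Only with a genuinely constant Poisson parameter (the approximation being justified via Le~Cam's theorem, since the conditionally independent indicators are not identically distributed) does the tail at $t = c'\log n/\log\log n$ become $n^{-c'+o(1)}$, after which the union bound over $n^c$ rounds goes through.
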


\begin{proof}
The proof is divided in the following steps:
\begin{enumerate}
    \item The bias in each community is initially $\vert s_i \vert = \Theta(\sqrt{n})$, for each $i \in \{1,2\}$, and the sign of the biases is different, with constant probability (Lemma~\ref{lemma:init});
    
    \item The bias in each community becomes $\vert s_i \vert = \Theta(\sqrt{n} \log n)$, for each $i \in \{1,2\}$, in $\bigO(\log \log n)$ rounds and the sign of the biases is preserved, with constant probability (Lemma~\ref{lemma:clustered-breaking-symmetry});
    
    \item The bias in each community becomes $|s_i| \geq n - \bigO(\log n)$, for each $i \in \{1,2\}$, in $\bigO(\log n)$ rounds and the sign of the biases is preserved, with high probability (Lemma~\ref{lemma:global-convergence});
    
    \item The process enters an \emph{almost-clustered} configuration in one single round and lies in the set of \emph{almost-clustered} configurations for the next $n^c$ rounds, with high probability
    (Lemma~\ref{lemma:metastability}).
\end{enumerate}

The proofs of the lemmas used in Theorem~\ref{theorem:main} are deferred to the Appendix.

Before starting with the proof, let us introduce some extra notation.
Let $\frac{b}{d}\leq c_1 \cdot n^{-1/2}$ for some positive constant $c_1$, i.e., let every node in each community have at most $c_1$ neighbors in the opposite community for every $\sqrt{n}$ neighbors in their own.
Let $\lambda \leq c_2 \cdot n^{-1/4}$, for some positive constant $c_2$;
note that the hypothesis on $\lambda$ implies that the subgraph induced by each community is a good expander.
Let us define the constant $h := 4(2\sqrt{2}c_1 + c_2^2)$.

We start the analysis of the process by looking at the initialization phase.
In particular, in Lemma~\ref{lemma:init} we show that there is a probability at least constant that the initialization is ``lucky'', i.e., that the biases in the two communities are $\Theta(\sqrt{n})$ toward different colors. 
This is true because the Binomial distribution, i.e., the initial distribution of the colors in the graph, is well approximated by a Gaussian distribution, and the latter has a constant probability to deviate from the mean by the standard deviation.
The Central Limit Theorem establishes the approximation of the distribution and we are able to quantify it using the Berry-Esseen Theorem.

\begin{restatable}[Lucky initialization]{lem}{lemmainit}
\label{lemma:init}
Let $G=(V, E)$ be a $(2n, d, b)$-clustered regular graph and let each node $u \in V$ choose a color $\bc{0}_u \in \{red,\, blue\}$ uniformly at random and independently from the others. 
Let $c_1$ and $c_2$ be two positive constants. 
Then, there exists a constant $\gamma_1$ such that
\[
\Prob{}{s^{(0)}_1 \geq h \sqrt{n} \wedge  - s^{(0)}_2 \geq h \sqrt{n}} \geq \gamma_1.
\]
\end{restatable}

Then, considering a configuration $\bc{t}$ at a generic time $t$, we look at the expected evolution of the process observing the behavior of one single community, but also taking into account the influence of the other.
Informally, Lemma~\ref{lemma:expectation} gives a bound to the number of nodes that will support the minority color in each community at the next round as a function of all the parameters involved in the process: 
the number of nodes supporting the minority color in each  community at the current round; 
the number of nodes supporting the same color in the other community at the current round; 
the expansion of the communities $\lambda \leq c_2 \cdot n^{-1/4}$; 
the cut density $\frac{b}{d} \leq c_1 \cdot n^{-1/2}$.

The proof of Lemma~\ref{lemma:expectation} leverages the fact that the expected number of nodes supporting a given color can be expressed as a quadratic form of the transition matrix of a simple random walk on the graph, allowing to relate the behavior of the process to the expansion of the communities, as exploited in~\cite{cooper_fast_2015,cooper_fast_2016}.
 
\begin{restatable}[Expected decrease of the minority color]{lem}{lemmaexpectation}
\label{lemma:expectation}
Let $G$ be a $(2n, d, b)$-clustered regular graph. 
For any configuration $\bc{t}$ we have that
\begin{multline*}\textstyle
\Ex{|B_1^{(t+1)}| \cond \bc{t}} 
< 
|B_1^{(t)}| \left[
    1 - \frac{s_1}{2n} + \frac{c_2^2}{\sqrt{n}} + 
\right.
\\\textstyle
\hspace{1cm}
\left.
    + \frac{2c_1}{\sqrt{n}} \sqrt{\frac{|B_2^{(t)}|}{|B_1^{(t)}|}\left( \frac{1}{2} - \frac{s_1}{2n} + \frac{c_2^2}{\sqrt{n}} + \frac{c_1^2 |B_2^{(t)}|}{n |B_1^{(t)}|}\right)}
\right]
\end{multline*}
and
\begin{multline*}\textstyle
\Ex{|R_2^{(t+1)}| \cond \bc{t}} 
< 
|R_2^{(t)}| \left[
    1 + \frac{s_2}{2n} + \frac{c_2^2}{\sqrt{n}} +
\right.
\\\textstyle
\hspace{1cm}
\left.
    + \frac{2c_1}{\sqrt{n}} \sqrt{\frac{|R_1^{(t)}|}{|R_2^{(t)}|}\left( \frac{1}{2} + \frac{s_2}{2n} + \frac{c_2^2}{\sqrt{n}} + \frac{c_1^2 |R_1^{(t)}|}{n |R_2^{(t)}|}\right)}
\right].
\end{multline*}
\end{restatable}

It follows from Lemma~\ref{lemma:expectation} that the asymmetry in the coloring of the nodes in the two communities continues to grow in expectation.
In fact, when in a certain range of values, the bias in the first community increases in expectation at each round while the bias in the second community decreases in expectation at each round, since the minority color in each community decreases.
With Lemma~\ref{lemma:unconditional_bias_increase} we prove that the increase of the bias in the first community and the decrease of the bias in the second community we have shown in expectation in Lemma~\ref{lemma:expectation} is multiplicative w.h.p.\ whenever $s_1$ satisfies $s_1 \in [h\sqrt{n}, \frac{n}{2}]$ and $s_2$ satisfies $s_2 \in [-\frac{n}{2}, -h\sqrt{n}]$.
With the use of concentration of probability arguments, namely a multiplicative form of the Chernoff bounds~\cite[Lemma~1.1]{dubhashi2009concentration}, we show that the number of nodes with the minority color in each community decreases and we use this fact to prove Lemma~\ref{lemma:unconditional_bias_increase}.

\begin{restatable}[Probability of multiplicative growth of the bias]{lem}{lemmabiasincrease}
\label{lemma:unconditional_bias_increase}
Let $\bc{t}$ be a configuration such that 
$h\sqrt{n} \leq s_1 \leq \frac{n}{2}$
and $h\sqrt{n} \leq -s_2 \leq \frac{n}{2}$.
Then, it holds that
\[
\Prob{}{s_1^{(t+1)} \geq (1 + 1/16)\, s_1 \cond \bc{t}} 
\geq 1 - e^{-\nicefrac{2s_1^2}{32^2 n}}
\]
and
\[
\Prob{}{s_2^{(t+1)} \leq (1 + 1/16)\, s_2 \cond \bc{t}} 
\geq 1 - e^{-\nicefrac{2s_2^2}{32^2 n}}.
\]
\end{restatable}

Now we know that there is a constant probability that the initialization of the process starts is ``lucky'' (Lemma~\ref{lemma:init}); 
we also know that the bias in the first community will increase in expectation and the bias in the second community will decrease in expectation (Lemma~\ref{lemma:expectation});
moreover, when in a given range, we know that the biases will follow their expected behavior with high probability (Lemma~\ref{lemma:unconditional_bias_increase}).

Then we need to show that the asymmetry in the coloring of the two communities will rapidly increase up to a configuration such that $|s_i| = \Theta({\sqrt{n}\log n})$, for each $i \in \{1,2\}$, while the sign of the biases is preserved.
More formally, with Lemma~\ref{lemma:clustered-breaking-symmetry} we prove the internal symmetry breaking of each community. 
This is possible by applying Lemma~\ref{lemma:init}, and by iterating the application of Lemma~\ref{lemma:unconditional_bias_increase} for $\bigO(\log \log n)$ rounds, i.e., until the bias is large enough; 
finally we handle the stochastic dependency between the two biases during their respective increases in opposite directions.

\begin{restatable}[Clustering -- Symmetry Breaking]{lem}{lemmaclusteringbreakingsymmetry}
\label{lemma:clustered-breaking-symmetry}
Starting from an initial configuration where each node $u \in V$ chooses a color $\bc{0}_u \in \{red, blue\}$ uniformly at random and independently from the others, it holds that, with constant probability, within $\bigO(\log \log n)$ rounds  the process reaches a configuration $\bc{t}$ such that 
\[
    s^{(t)}_1 \geq \sqrt{n} \log n
    \,\text{ and }\,
    -s^{(t)}_2 \geq \sqrt{n} \log n.
\]
\end{restatable}

Once the internal symmetry of each community is broken, we show that, with high probability, both biases keep increasing while preserving their sign until they rapidly reach a configuration in which the minority color in each community has at most logarithmic size.
This behavior is formally proved in Lemma~\ref{lemma:global-convergence}, 
again through the application of Lemma~\ref{lemma:expectation} and Lemma~\ref{lemma:unconditional_bias_increase}.

\begin{restatable}[Convergence]{lem}{lemmaconvergence}
\label{lemma:global-convergence}
Starting from a configuration $\bc{t}$ such that $\vert s_i \vert \geq \sqrt{n} \log n$, for each $i \in \{1,2\}$, 
there exist two rounds $\tau_1,\tau_2 = \bigO(\log n)$ such that 
\[
    \vert s^{(\tau_1)}_1 \vert \geq n - \log n
    \,\text{ and }\,
    \vert s^{(\tau_2)}_2 \vert \geq n - \log n
\]
and the sign of the biases is preserved, with high probability.
\end{restatable}

Finally, with Lemma~\ref{lemma:metastability} we show that the number of wrongly colored nodes in each community drops to $\bigO(\log n / \log \log n)$ in one single round (by approximating it with a Poisson random variable through an application of Le~Cam's Theorem) and then, with high probability, the process enters a \emph{metastable} phase in which the only possible configurations are \emph{almost-clustered}; this will last for any polynomial number of rounds.
In other words, even if a few nodes in each community will continue to change color, almost all the nodes in one community will support one color while almost all the nodes in the other community will support the other color.
Note that this quantity is \emph{tight}: It is possible to prove that, within any polynomial number of rounds, there will be a round in which at least $\Omega(\log n / \log\log n)$ nodes in each community will have the wrong color.

\begin{restatable}[Metastability]{lem}{lemmametastability}
\label{lemma:metastability}
Let $c \in \mathbb{N}$ be any constant.
Starting from a configuration $\bc{t}$ such that $\vert s_i \vert \geq n - \log n$ for each $i \in \{1,2\}$, for the next $n^c$ rounds the process lies in the set of configurations such that 
\[\textstyle
    \vert s_i \vert \geq n - \bigO\left(\frac{\log n}{\log \log n}\right)
\]
and the sign of the bias is preserved, with high probability.
\end{restatable}

More formally, through Lemma~\ref{lemma:global-convergence} and Lemma~\ref{lemma:metastability} we can finally prove that $\Prob{}{\xi} \geq \gamma_1$ and $\Prob{}{\xi_c} \geq 1 - n^{-\gamma_2}$ for any constant $c$, concluding the proof of Theorem~\ref{theorem:main}.
\end{proof}

\section{Distributed Label Propagation Algorithm via\\ Community-Sensitive Labeling}
\label{sec:csl}

We showed that, starting from a random initialization, the \TwoChoices{} dynamics reaches an \emph{almost-clustered} configuration within $\bigO(\log n)$ rounds with constant probability. 
This result is tight, given that there is constant probability that the two communities converge to the same color. 
Similarly to Lemma~\ref{lemma:init}, it holds that with constant probability both the biases are unbalanced toward the same color, i.e., $s^{(0)}_1 \geq h\sqrt{n}$ and $s^{(0)}_2 \geq  h\sqrt{n}$. It means that a suitable variant of Lemma~\ref{lemma:clustered-breaking-symmetry} shows that there is constant probability that within $\bigO(\log \log n)$ rounds the process reaches a configuration such that $ s^{(t)}_1 \geq \sqrt{n} \log n \,\text{ and }\, s^{(t)}_2 \geq \sqrt{n} \log n$.
Then, Lemma~\ref{lemma:global-convergence} and Lemma~\ref{lemma:metastability} show that the system gets quickly stuck in a configuration where almost all nodes have the same color.
This is a proof that, given the symmetric nature of the process, we need some luck in the initialization to reach an \emph{almost-clustered} configuration.

In order to get an algorithm that works w.h.p.\ we sketch how to use the results of the previous sections to build a Community-Sensitive Labeling~\cite{becchetti_average_2017} within $\Theta(\log n)$ rounds.
A Community-Sensitive Labeling (CSL) is made up by a labeling of the nodes and a predicate that can be applied to pairs of labels;
it holds that, for all but a small number of outliers, the predicate is satisfied if the nodes belong to the same community, and it is not satisfied if the nodes belong to different communities.

\begin{theorem}[LPA via CSL]
Let $G = (V, E)$ be a connected and nonbipartite $(2n,d,b)$-clustered regular graph such that
$\frac{b}{d} = \bigO(n^{-1/2})$ and $\lambda = \bigO(n^{-1/4})$.
Let $\bc{0}$ be the initial configuration, where each node $u \in V$ picks a \emph{vector} of colors $\bc{0}_u \in \{red, blue\}^{\ell}$ sampled uniformly at random and independently from the other nodes, such that $\ell = c\log n$ for some positive constant $c$.
Consider the resulting vector after $\Theta(\log n)$ rounds of independent parallel runs of the \TwoChoices{} dynamics, each one working on a different component of the vector: 
For all the pairs of nodes but a polylogarithmic number, it holds that the vectors of nodes in the same community are equal while the vectors of nodes in different communities are different.
\end{theorem}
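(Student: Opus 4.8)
The plan is to regard the $\ell = c\log n$ coordinates of the label vectors as $\ell$ mutually independent runs of the \TwoChoices{} dynamics, all started from independent uniform colorings, and to read off the labels at a single round $t^\star = \Theta(\log n)$ taken to be the convergence time furnished by Theorem~\ref{theorem:main}. For a fixed coordinate $k$, call the run \emph{successful} if its configuration at time $t^\star$ is almost-clustered. By Theorem~\ref{theorem:main} (events $\xi$ and $\xi_c$) each run is successful with probability at least $\gamma_1/2$, and by the symmetric ``unlucky initialization'' argument sketched after the theorem (via Lemma~\ref{lemma:global-convergence} and Lemma~\ref{lemma:metastability} applied to the equal-sign case) the remaining runs are, with high probability, \emph{monochromatic} rather than trapped in an intermediate state; a union bound over the $\bigO(\log n)$ coordinates then guarantees that with high probability every coordinate is either almost-clustered or monochromatic at $t^\star$. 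Since the coordinates are independent, a Chernoff bound shows that the number of successful coordinates is $\Omega(\log n)$ with high probability. The two features I exploit are that on a successful coordinate the two communities display opposite majority colors, while on a monochromatic coordinate every node displays the same color.

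I would first dispose of the different-community direction, which I expect to hold for \emph{all} cross pairs simultaneously. Fix $u \in V_1$ and $v \in V_2$. Conditioned on coordinate $k$ being successful, $u$ and $v$ carry opposite majority colors unless one of them lies among the $\bigO(\log n/\log\log n)$ outliers of Lemma~\ref{lemma:metastability}, an event of probability $o(1)$; hence $u$ and $v$ disagree on coordinate $k$ with probability at least $\gamma_1/4$, and these events are independent across $k$ because distinct coordinates use disjoint randomness. Therefore the probability that the two length-$\ell$ vectors coincide is at most $(1-\gamma_1/4)^{\ell} \leq n^{-c'}$, and by taking the constant $c$ in $\ell = c\log n$ large enough we can force $c' > 2$. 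A union bound over the at most $n^2$ cross pairs then shows that, with high probability, every pair of nodes in different communities receives distinct vectors.

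The delicate direction, and the one I expect to be the main obstacle, is that same-community nodes must receive \emph{identical} vectors: exact equality forbids even a single coordinate of disagreement, so the argument has to be driven by the rarity of outliers across coordinates rather than by a majority vote. I would let $W$ be the union, over all successful coordinates, of the outlier sets of Lemma~\ref{lemma:metastability}; since each successful coordinate contributes $\bigO(\log n/\log\log n)$ outliers and there are $\bigO(\log n)$ coordinates, the set $W$ has polylogarithmic size. For two same-community nodes $u,v \notin W$, on every monochromatic coordinate both display the common color and on every successful coordinate both display their community's majority color (neither being an outlier), so their vectors agree in every coordinate and hence are equal; combined with the previous paragraph this yields the claimed characterization for all pairs avoiding the polylogarithmic set $W$. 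The crux is the tension between two demands on $c$: it must be large enough that the cross-pair failure probability survives the union bound over $\Theta(n^2)$ pairs, yet the per-coordinate outlier population of Lemma~\ref{lemma:metastability}, accumulated over all $\ell$ coordinates, must stay polylogarithmic so that $W$ remains small. Verifying that the single-round outlier count concentrates at a $\poly\log n$ level—which is precisely where the Poisson approximation underlying Lemma~\ref{lemma:metastability} enters—is the technical heart of keeping $W$ under control.
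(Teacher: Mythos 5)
Your outline for the cross-community direction matches the paper's (at least one of the $\Theta(\log n)$ coordinates ends almost-clustered, with probability $1-(1-\gamma_1)^{\Theta(\log n)} = 1-n^{-\Theta(1)}$), but the same-community direction --- which you rightly identify as the delicate one --- contains a genuine gap. You claim that, with high probability, every non-successful coordinate is \emph{monochromatic} at time $t^\star$, and you justify this via the ``unlucky initialization'' variant of the symmetry-breaking argument together with Lemma~\ref{lemma:global-convergence} and Lemma~\ref{lemma:metastability}. This does not work: the lucky case ($s_1^{(0)}, -s_2^{(0)} \geq h\sqrt{n}$) and the unlucky case ($s_1^{(0)}, s_2^{(0)} \geq h\sqrt{n}$) each occur only with \emph{constant} probability, and there remains a third event of constant probability in which at least one community starts with bias $o(\sqrt n)$ (say, below $h\sqrt{n}$). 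For coordinates whose initialization falls in this third case, none of the lemmas you cite applies --- Lemma~\ref{lemma:global-convergence} and Lemma~\ref{lemma:metastability} both require $\vert s_i\vert \geq \sqrt{n}\log n$ as a hypothesis, and Lemma~\ref{lemma:unconditional_bias_increase} requires $\vert s_i\vert \geq h\sqrt n$ --- so you cannot conclude that such a run ends in internal near-consensus within $O(\log n)$ rounds, let alone with high probability. Since a constant fraction of your $c\log n$ coordinates are expected to land in this uncontrolled case, your set $W$ of exceptional nodes cannot be bounded, and the equality of same-community vectors fails.

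The missing ingredient is precisely what the paper supplies as Lemma~\ref{lemma:consensus-breaking-symmetry} (Consensus -- Symmetry Breaking): starting from \emph{any} initial configuration, within $O(\log n)$ rounds both $\vert s_1\vert$ and $\vert s_2\vert$ reach $\sqrt{n}\log n$ with high probability. This is not a Chernoff-type statement; it is proved with the general Markov-chain symmetry-breaking tool of \cite{clementi_tight_2017} (Lemma~\ref{lemma:symmetrygeneric}), which combines an anti-concentration step (from any balanced state, the variance of the process pushes the bias to $\Omega(\sqrt n)$ with constant probability in each round, via the Berry--Esseen argument of Lemma~\ref{lemma:init}) with the multiplicative-growth step of Lemma~\ref{lemma:unconditional_bias_increase}. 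With that lemma in hand, every coordinate --- regardless of the sign or magnitude of its initial biases --- w.h.p.\ reaches internal near-consensus in each community within $O(\log n)$ rounds, and your accounting of the outlier set $W$ (and hence the same-community equality for non-outlier pairs) goes through essentially as you describe. A secondary, smaller issue: your claim that \emph{all} cross pairs get distinct vectors relies on a per-node outlier probability of $o(1)$, which Lemma~\ref{lemma:metastability} does not provide (it bounds the \emph{size} of the outlier set, not the probability that a given node belongs to it); the theorem's statement only requires the predicate to hold for all pairs outside a small exceptional set, so the weaker set-based bookkeeping you use for same-community pairs suffices for cross pairs as well.
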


\begin{proof}[Sketch of proof]
As for the first part of the predicate, it is a simple application of Theorem~\ref{theorem:main}. Indeed, at least one of the $\Theta(\log n)$ runs of the \TwoChoices{} dynamics ends in an \emph{almost-clustered} configuration with probability $1 - \gamma^{-\Theta(\log n)} = 1 - n^{-\Theta(1)}$. 
As for the second part we show that no matter if the process reaches an almost-clustering, nodes in the same community will have the same color with high probability.
This is consequence of Lemma~\ref{lemma:global-convergence} and of the following one, which we can prove by applying a general tool for Markov Chains~\cite[Lemma~4.5]{clementi_tight_2017}.

\begin{restatable}[Consensus -- Symmetry Breaking]{lem}{lemmaconsensusbreakingsymmetry}
\label{lemma:consensus-breaking-symmetry}
Starting from any initial configuration $\bc{0}$, within $\bigO(\log n)$ rounds the system reaches a configuration
$\bc{t}$ such that 
\[
    \vert s^{(t)}_1 \vert \geq \sqrt{n} \log n
    \,\text{ and }\,
    \vert s^{(t)}_2 \vert \geq \sqrt{n} \log n,
\]
with high probability.
\end{restatable}

Thus, most pairs of nodes can locally distinguish if they are in the same community with high probability by checking whether their vectors differ on any component.
\end{proof}

\section{Conclusions and future work}

We focused on providing a proof of concept of how spectral techniques and concentration of probability results can be combined to provide a rigorous analysis of the behavior of dynamics converging to metastable configurations that reflect structural properties of the network. 
In turns, we identified two important implications of our result, which we discussed in the Introduction and we briefly recall here.
In the context of \emph{graph clustering}, it constitute the first analytical result on a distributed \emph{label propagation algorithm} with quasi-linear message complexity, contributing to a deeper understanding of such class of widely applied heuristics to detect communities in networks.
In the framework of \emph{evolutionary biology}, it provides a simplistic model of how \emph{species differentiation} may occur as the result of the interplay between the local interaction rule at the population level and the underlying topology that describes such interaction. 

A limitation of our approach is the restriction to regular topologies. 
The regularity assumption greatly simplifies the calculations, which are still quite involved. 
However, it has been shown in~\cite{cooper_fast_2015} that a similar analysis can be performed for general topologies. 
Thus, it should be possible to extend our analysis to the irregular case, at the price of a much greater amount of technicalities. 
For example, it should be possible to prove a generalization of our result to the class of $(2n, d, b, \gamma)$-clustered graphs investigated in~\cite{becchetti_find_2017}, which relaxes the class of $(2n, d, b)$-clustered graphs by assuming that each node has $d\pm \gamma d$ neighbors of which $b\pm \gamma d$ belongs to the other community. 
In fact it is possible to bound the second eigenvalue of the graph in a way which approximates (depending on $\gamma$) the $(2n, d, b)$-clustered graphs case considered here using~\cite[Lemma C.2]{becchetti_find_2017}.
Another important issue is to get a denser cut, at least parametrized w.r.t.\ the number of edges inside each community. This cannot be achieved by slightly changing the analysis of this paper, but requires a different approach, since it is possible to show that the technique used in Lemma~\ref{lemma:expectation} brings to a sparse cut.
Finally, an interesting direction is the use of domination arguments, perhaps based on coupling techniques, to generalize our result to more general dynamics which interpolates between the \emph{quadratic} \TwoChoices{} dynamics and the \emph{linear} \textsc{Voter} dynamics~\cite{berenbrink_ignore_2017}.
In particular, this latter direction would have more general implications in the practical contexts discussed in this work, namely label propagation algorithms and evolutionary dynamics.

\bibliographystyle{alpha}
\bibliography{references}

\clearpage
\appendix
\section{Omitted Proofs of Section~\ref{sec:analysis}}

\lemmainit*
\begin{proof}
The initial bias of the first cluster $s_1 = |R_1| - |B_1|$ can be thought as a sum of Rademacher random variables,
i.e.\ $s_1 = \sum_{i \in V_1} X_i$ where $X_i = 1$ if node $i$ chooses color \emph{red} and $X_i = -1$ if node $i$ chooses color \emph{blue}.
Rademacher random variables have mean equal to 0, variance equal to 1, and third moment equal to 1; 
thus, we can apply the Berry-Esseen theorem (Theorem~\ref{thm:berry-esseen})
which in our case states that
\[
\left\vert 
    \Prob{}{\frac{\sum_{i \in V_1}{X_i}}{\sqrt{n}} \leq h} - \Phi(h)
\right\vert
    \leq \frac{C}{\sqrt{n}},
\]
where $\Phi$ is the cumulative distribution function of the standard normal distribution and $C$ is a universal positive constant.
Hence, 
\begin{align*}
\Prob{}{\sum_{i \in V_1}{X_i} < 4(2\sqrt{2}c_1 + c_2^2) \sqrt{n}}
&\leq \Phi(4(2\sqrt{2}c_1 + c_2^2)) + \frac{C}{\sqrt{n}}\\
&\stackrel{(a)}{\leq} \Phi(4(2\sqrt{2}c_1 + c_2^2)) + \epsilon \label{eq:epsilon}
\stackrel{(b)}{\leq} \alpha,
\end{align*}
where in $(a)$ $\epsilon$ is a suitably small positive constant and $(b)$ holds for a positive constant $\alpha$ strictly smaller than one, because for every $h$ constant also $\Phi(h)$ is a constant strictly smaller than one.
The same inequality also holds for $s_2$. 
Note that, since the random variables $s_1$ and $s_2$ are independent, we have:
\[
\Prob{}{s_1 \geq h \sqrt{n}} \cdot \Prob{}{s_2 \geq h \sqrt{n}}
\geq (1-\alpha)^2 = \gamma_1.
\]
\end{proof}

\lemmaexpectation*
\begin{proof}
W.l.o.g.\ we analyze the case of the \emph{blue} minority color in community $1$. 
The proof is completely symmetric for the \emph{red} minority color in community $2$.

For every set $Z \in \{B,B_1,B_2,R,R_1,R_2\}$ and for every node $v \in V$, we define $Z(v) = N(v) \cap Z$, where $N(v)$ is the set of neighbors of $v$. 
Thus, by definition of \TwoChoices{} dynamics (Definition~\ref{def:2choices}),
we can write the expected number of nodes supporting the minority color in community $1$ at round $t+1$
as the sum of the probabilities for each node supporting color \emph{red} of picking two \emph{blue} nodes (and thus becoming \emph{blue}) 
and the sum of the probabilities for each \emph{blue} node of not picking two \emph{red} nodes (and thus remaining \emph{blue}). 
\begin{align*}
&\Ex{|B_1^{(t+1)|} \cond \bc{t}} =
\sum_{x \in R_1} \left( \frac{|B(x)|}{d} \right)^2 + \sum_{x \in B_1}\left( 1 - \left( \frac{|R(x)|}{d} \right)^2\right)\nonumber  \\
 &=
\sum_{x \in V_1} \left( \frac{|B(x)|}{d} \right)^2 - \sum_{x \in B_1} \left( \frac{|B(x)|}{d} \right)^2
	+ \sum_{x \in B_1} \left(1 - \left(1 - \frac{|B(x)|}{d} \right)^2 \right)\nonumber \\
 &=
\sum_{x \in V_1} \left( \frac{|B(x)|}{d} \right)^2 
	- \sum_{x \in B_1} \left( \frac{|B(x)|}{d} \right)^2 
		+ \sum_{x \in B_1} \left( 1 - 1 + 2\frac{|B(x)|}{d} - \left( \frac{|B(x)|}{d} \right)^2 \right)\nonumber  \\
 &=
\sum_{x \in V_1} \left( \frac{|B(x)|}{d} \right)^2 + 2 \sum_{x \in B_1} \left( \frac{|B(x)|}{d} - \left( \frac{|B(x)|}{d} \right)^2 \right)\nonumber \\
 &=
 \sum_{x \in V_1} \left( \frac{|B(x)|}{d} \right)^2 
	+  2 \sum_{x \in B_1} \left( \frac{|B(x)|}{d} \, \left( 1 - \frac{|B(x)|}{d} \right) \right)\nonumber\\
 &\leq
\sum_{x \in V_1} \left( \frac{|B(x)|}{d} \right)^2 
	+ \left( \frac{B_1}{2} \right),
\end{align*}
where in the last inequality we used the fact that 
$\frac{|B(x)|}{d} \, \big(1 - \frac{|B(x)|}{d}\big) \leq \frac{1}{4}$,
since it is a concave function and its maximum is $\frac{1}{4}$.

In order to bound the quantity $\sum_{x \in V_1} \left( \frac{|B(x)|}{d} \right)^2$ we use the assumptions on the structure of $G$, 
i.e.\ that it is $(2n, d, b)$-clustered, that $\frac{b}{d} \leq c_1 \cdot n^{-1/2}$, and that $\lambda \leq c_2 \cdot n^{-1/4}$.
In particular, we split the quantity into three terms as follows:
\begin{align*}
\sum_{x \in V_1} \left( \frac{|B(x)|}{d} \right)^2 &= 
    \sum_{x \in V_1} \left( \frac{|B_1(x)|}{d} + \frac{|B_2(x)|}{d} \right)^2 \\
    &= \sum_{x \in V_1} \left( \frac{|B_1(x)|}{d} \right)^2 
    + \sum_{x \in V_1} \left( \frac{|B_2(x)|}{d} \right)^2 
    + 2 \sum_{x \in V_1} \frac{|B_1(x)|}{d}\cdot \frac{|B_2(x)|}{d}.
\end{align*}

We upper bound the first of the terms by using $\lambda := \max(|\lambda_2|, |\lambda_n|, |\mu_2|, |\mu_n|)$, which gives a measure of the internal expansion of the graphs induced by the clusters.
Let $\bar{P}_{1,1} := \frac{d}{a} P_{1,1}$ be the transition matrix of the subgraph induced by the first cluster. 
Notice that, since $G$ is $(2n, d, b)$-clustered, the subgraph induced by the first cluster is $a$-regular and thus $\bar{P}_{1,1}$ is symmetric.
Consequently the eigenvectors of $\bar{P}_{1,1}$ form an orthonormal basis of the space.
%
Let $\bb^{(t)}$ be the indicator vector of the set $B$, 
i.e.\ $\bb^{(t)}(v) = 1$ if $v \in B^{(t)}$ and $\bb^{(t)}(v) = 0$ otherwise. When clear from the context we will omit the time $t$.
This allows us to write the matrix in its spectral decomposition, i.e.\ $\bar{P}_{1,1} = \sum_{i=1}^{n} \lambda_i \bv_i \bv_i^\intercal$, 
and the indicator vector of the blue nodes in the first cluster as a linear combination of the eigenvectors of $\bar{P}_{1,1}$, 
i.e.\ $\bone_{B_1} = \sum_{i=1}^{n} \alpha_i \bv_i$ 
with $\alpha_i = \langle \bv_i, \bone_{B_1} \rangle$.
Hence, we get that
\begin{align*}
    \sum_{x \in V_1}  \left( \frac{|B_1(x)|}{d} \right)^2 
    &= \norm{P_{1,1} \bb_1}_2^2
    =\bb_1^\intercal P_{1,1}^\intercal \cdot P_{1,1}\bb_1
    \\
    &= \bb^\intercal_1 P_{1,1}^2 \bb_1 
    = \frac{a^2}{d^2} \bb^\intercal_1 \bar{P}_{1,1}^2 \bb_1
    \\
    &\leq \bb^\intercal_1 \bar{P}_{1,1}^2 \bb_1
    = \bb^\intercal_1
    	\cdot \sum_{i=1}^n \lambda_i^2 \bv_i \bv_i^\intercal
    	\cdot \sum_{i=1}^{n} \alpha_i \bv_i
    \\
    &= \bb^\intercal_1
    	\cdot \sum_{i=1}^n \lambda_i^2 \alpha_i \bv_i
    \\
    &= \bb^\intercal_1
    	\cdot \left( 
    		\lambda_1^2 \alpha_1 \bv_1 + 
    		\sum_{i=2}^{n} \lambda_i^2 \alpha_i \bv_i
    	\right)
    \\
    &\leq \bb^\intercal_1
    	\cdot \left( 
    		\lambda_1^2 \alpha_1 \bv_1 + 
    		\lambda^2 \sum_{i=2}^{n} \alpha_i \bv_i
    	\right)
    \\
    &\leq \bb^\intercal_1
    	\cdot \left( 
    		\lambda_1^2 \alpha_1 \bv_1 + 
    		\lambda^2 \sum_{i=1}^{n} \alpha_i \bv_i
    	\right)
    \\
    &= \bb^\intercal_1
    	\cdot \left( 
    		\alpha_1 \bv_1 + 
    		\lambda^2 \bb_1
    	\right)
    = \frac{|B_1|^2}{n} + \lambda^2 |B_1|.
\end{align*}

The second of the terms can be bounded using the Cauchy-Schwarz inequality and the fact that the fraction of neighbours in the other community is $\frac{b}{d}$. 
Formally, we get
\begin{align*}
\sum_{x \in V_1} \left( \frac{|B_2(x)|}{d} \right)^2 
&\leq \left( \norm{P_{1,2} \bb_2}_2 \right)^2
\leq \left( \norm{P_{1,2}}_2\norm{\bb_2}_2\right)^2
\\
&= \left( \norm{P_{1,2}}_2 \sqrt{|B_2|}\right)^2
\leq \left( \sqrt{\norm{P_{1,2}}_1 \cdot \norm{P_{1,2}}_\infty} \cdot \sqrt{|B_2|}\right)^2
\\
&= \left( \frac{b}{d} \sqrt{|B_2|}\right)^2
= \frac{b^2}{d^2}|B_2|,
\end{align*}
where in the last inequality we combined Corollary~\ref{corollary:matrix-norm} with the two following observations:
\begin{itemize}
    \item $\Vert P_{1,2} \Vert_1 := \max_{1 \leq j \leq n} \sum_{i=1}^{n} |b_{ij}| = \frac{b}{d}$, since each node in the first community has exactly $b$ neighbors in the second community.
    \item $\Vert P_{1,2} \Vert_\infty := \max_{1 \leq i \leq n} \sum_{j=1}^{m} |b_{ij}| = \frac{b}{d}$, since each node in the second community has exactly $b$ neighbors in the first community and $P_{1,2} = P_{2,1}^\intercal$.
\end{itemize}

For the third and last term, which equals twice the product of the first two, we use the previously derived bounds and get the following quantity:
\begin{align*}
2 \sum_{x \in V_1} \frac{|B_1(x)|}{d} \cdot \frac{|B_2(x)|}{d}
&= 2 \norm{P_{1,1} b_1}_2 \cdot \norm{P_{1,2} b_2}_2
\\
&\leq 2 \frac{b}{d} \sqrt{|B_2|\left( \frac{|B_1|^2}{n} + \lambda^2 |B_1| \right)}.
\end{align*}

Before combining the three bounds, we recall that by hypothesis $G$ is such that $\frac{b}{d} \leq c_1 \cdot n^{-1/2}$ and $\lambda \leq c_2 \cdot n^{-1/4}$. 
Hence:
\begin{align*}
&\Ex{|B_1^{(t+1)|} \cond \bc{t}} 
\leq \frac{|B_1|^2}{n} + \lambda^2 |B_1| + \frac{b^2}{d^2}|B_2| + 2 \frac{b}{d} \sqrt{|B_2|\left( \frac{|B_1|^2}{n} + \lambda^2 |B_1| \right)}  + \left( \frac{|B_1|}{2} \right)
\\
&= \frac{|B_1|^2}{n} + \lambda^2 |B_1| + \frac{b^2}{d^2}|B_2| + 2 \frac{b}{d} \sqrt{|B_1| \cdot |B_2|\left( \frac{|B_1|}{n} + \lambda^2\right)}  + \left( \frac{|B_1|}{2} \right)
\\
&\leq \frac{|B_1|^2}{n} + c_2^2 \frac{|B_1|}{\sqrt{n}} + c_1^2 \frac{|B_2|}{n} + \frac{2 c_1}{\sqrt{n}} \sqrt{|B_1| \cdot |B_2|\left( \frac{|B_1|}{n} + \frac{c_2^2}{\sqrt{n}}\right)}  + \left( \frac{|B_1|}{2} \right)
\\
&= |B_1|\left( \frac{|B_1|}{n} + \frac{c_2^2}{\sqrt{n}} + \frac{c_1^2 |B_2|}{n |B_1|} + \frac{2c_1}{\sqrt{n}} \sqrt{\frac{|B_2|}{|B_1|}\left( \frac{|B_1|}{n} + \frac{c_2^2}{\sqrt{n}}\right)} + \frac{1}{2} \right)
\\
&< |B_1|\left( \frac{1}{2} - \frac{s_1}{2n} + \frac{c_2^2}{\sqrt{n}} + \frac{c_1^2 |B_2|}{n |B_1|} + \frac{2c_1}{\sqrt{n}} \sqrt{\frac{|B_2|}{|B_1|}\left( \frac{1}{2} -  \frac{s_1}{2n} + \frac{c_2^2}{\sqrt{n}}\right)} + \frac{1}{2} \right)
\\
&< |B_1|\left( \frac{1}{2} - \frac{s_1}{2n} + \frac{c_2^2}{\sqrt{n}} + \frac{2c_1}{\sqrt{n}} \sqrt{\frac{|B_2|}{|B_1|}\left( \frac{1}{2} -  \frac{s_1}{2n} + \frac{c_2^2}{\sqrt{n}} + \frac{c_1^2 |B_2|}{n |B_1|}\right)} + \frac{1}{2} \right)
\\
&< |B_1|\left(1 - \frac{s_1}{2n} + \frac{c_2^2}{\sqrt{n}} + \frac{2c_1}{\sqrt{n}} \sqrt{\frac{|B_2|}{|B_1|}\left( \frac{1}{2} -  \frac{s_1}{2n} + \frac{c_2^2}{\sqrt{n}} + \frac{c_1^2 |B_2|}{n |B_1|}\right)}\right).
\end{align*}

Thus, we finally get
\[
\Ex{|B_1^{(t+1)}| \cond \bc{t}} 
< 
|B_1| \left[
    1 - \frac{s_1}{2n} + \frac{c_2^2}{\sqrt{n}} 
    + \frac{2c_1}{\sqrt{n}} \sqrt{\frac{|B_2|}{|B_1|}\left( \frac{1}{2} - \frac{s_1}{2n} + \frac{c_2^2}{\sqrt{n}} + \frac{c_1^2 |B_2|}{n |B_1|}\right)}
\right]
\]
and, having a symmetric scenario in community $2$, that
\[
\Ex{|R_2^{(t+1)}| \cond \bc{t}} 
< 
|R_2| \left[
    1 + \frac{s_2}{2n} + \frac{c_2^2}{\sqrt{n}}
    + \frac{2c_1}{\sqrt{n}} \sqrt{\frac{|R_1|}{|R_2|}\left( \frac{1}{2} + \frac{s_2}{2n} + \frac{c_2^2}{\sqrt{n}} + \frac{c_1^2 |R_1|}{n |R_2|}\right)}
\right].
\]
\end{proof}

\lemmabiasincrease*
\begin{proof}
Let us start with the bias in the first community.
Note that our assumption on the bias implies that
\(
\frac{n}{4} \leq |B_1| \leq \frac{n - s_1}{2} < \frac{n}{2},
\)
and thus $\frac{|B_2|}{|B_1|} \leq 4$. 
Therefore, under these conditions the expectation of $|B_1^{(t+1)}|$ can be upper bounded as follows:
\begin{align*}
\Ex{|B_1^{(t+1)|} \cond \bc{t}}
&< |B_1|\left( 1 - \frac{s_1}{2n} + \frac{c_2^2}{\sqrt{n}} + \frac{2c_1}{\sqrt{n}} \sqrt{\frac{|B_2|}{|B_1|}\left( \frac{1}{2} -  \frac{s_1}{2n} + \frac{c_2^2}{\sqrt{n}} + \frac{c_1^2 |B_2|}{n |B_1|}\right)} \right)
\\
&< |B_1|\left( 1 - \frac{s_1}{2n} + \frac{c_2^2}{\sqrt{n}} + \frac{2c_1}{\sqrt{n}} \sqrt{4\left( \frac{1}{2} -  \frac{s_1}{2n} + \frac{c_2^2}{\sqrt{n}} + \frac{4 c_1^2}{n}\right)} \right)
\\
&\stackrel{(a)}{<} |B_1|\left( 1 - \frac{s_1}{2n} + \frac{c_2^2}{\sqrt{n}} + \frac{2c_1}{\sqrt{n}} \sqrt{4 \cdot \frac{1}{2}} \right)
\\
&< |B_1|\left( 1 - \frac{s_1}{2n} + \frac{c_2^2}{\sqrt{n}} + \frac{2c_1\sqrt{2}}{\sqrt{n}}\right)
\\
&= |B_1|\left( 1 - \frac{s_1}{2n} + \frac{2c_1\sqrt{2} + c_2^2}{\sqrt{n}}\right)
\\
&= |B_1|\left( 1 - \frac{s_1}{2n} + \frac{s_1}{4n}\right)
= |B_1|\left( 1 - \frac{s_1}{4n}\right),
\end{align*}
where in $(a)$ we used that $\frac{s_1}{2n} \geq \frac{c_2^2}{\sqrt{n}} + \frac{4 c_1^2}{n}$ by hypothesis.

Using the additive form of the Chernoff Bound and that $s_1 \leq \frac{n}{2}$, we get that
\begin{align*}
\Prob{}{|B_1|^{(t+1)} > |B_1|\left( 1 - \frac{s_1}{8n}\right) \cond \bc{t}}
&= \Prob{}{|B_1|^{(t+1)} > |B_1|\left( 1 - \frac{s_1}{4n}\right) + \frac{s_1 |B_1|}{8n} \cond \bc{t}}
\\
&\leq  \Prob{}{|B_1|^{(t+1)} > |B_1|\left( 1 - \frac{s_1}{4n}\right) + \frac{s_1}{32} \cond \bc{t}}
\\
&\leq  \Prob{}{|B_1|^{(t+1)} > \Ex{|B_1|^{(t+1)} \cond \bc{t}} + \frac{s_1}{32} \cond \bc{t}}
\\
&\leq \exp{(-{2 (s_1)^2}/{(32^2n)})}.
\end{align*}

Since it holds that $s_1 = n - 2 |B_1|$, then with probability 
$1 - \exp{(-2(s_1)^2/(32^2 n))}$ it holds that
\begin{multline*}
s_1^{(t+1)}
\geq n - 2 |B_1| \left( 1 - \frac{s_1}{8n}\right)
= n - (n - s_1) \left( 1 - \frac{s_1}{8n}\right)
\\
= n - n \left( 1 - \frac{s_1}{8n}\right) + s_1 \left( 1 - \frac{s_1}{8n}\right)
= \frac{s_1}{8} + s_1 - \frac{s_1^2}{8n}
\\
\geq \frac{s_1}{8} + s_1 - \frac{s_1}{16}
= s_1 \left( 1 + \frac{1}{16} \right).
\end{multline*}

The same reasoning can also be applied to the symmetric case of $s_2$.
\end{proof}

\lemmaclusteringbreakingsymmetry*
\begin{proof}
Let $\mathcal{I}$ be the event ``the initial configuration has the property that $s^{(0)}_1 \geq 4(2\sqrt{2}c_1 + c_2^2) \sqrt{n}$ and $-s^{(0)}_2 \geq 4(2\sqrt{2}c_1 + c_2^2) \sqrt{n}$.''
In Lemma~\ref{lemma:init} we proved that $\mathcal{I}$ happens with a probability that is at least constant. 
Then, starting from such configuration, we use Lemma~\ref{lemma:unconditional_bias_increase} in order to show that $s_1$ becomes greater than $\sqrt{n}\log n$ 
and $s_2$ becomes smaller than $-\sqrt{n}\log n$
within $\bigO(\log \log n)$ rounds, with constant probability.

We define a round $t$ to be \emph{successful} w.r.t.\ community $1$ if one of the two following conditions hold:
\begin{itemize}
    \item the process has not reached yet a configuration in which the bias is is multiplicatively increasing and is large enough, namely $s_1^{(t)} \geq s_1^{(t-1)} \left( 1 + \frac{1}{16} \right)$ and $s_1^{(t-1)} < \sqrt{n} \log n$;
    \item the bias was already large enough in a previous round, i.e., there exists a round $t' < t$ such that $s_1^{(t')} \geq \sqrt{n} \log n$.
\end{itemize}
The definition extends to community $2$ in a symmetric fashion.

Let $h =  4(2\sqrt{2}c_1 + c_2^2)$, $\alpha = 2(h/32)^2$, $\beta = (1 + 1/16)$, and let us define the events
\begin{quote}
$R_i^{(t)}$: ``The round $t$ is successful w.r.t.\ community $i$.''
\end{quote}
\begin{quote}
$\mathcal{K}_i$: ``The first $\log_{\beta} \log n$ rounds are successful w.r.t.\ community $i$.''
\end{quote}

Note that, after $T$ consecutive \emph{successful} rounds w.r.t.\ community $1$, 
the stochastic process reaches a configuration $\bar{c}$ such that 
$s_1 \geq  h\sqrt{n}(1 + 1/16)^{T}$ and then the probability that also the next round is successful is at least 
$1 - \exp{(-2h^2(1 + 1/16)^{2T}/(32^2))}$.
Conditioning to the event $\mathcal{I}$, we have that
\begin{align*}
\Prob{}{\mathcal{K}_1\cond \mathcal{I}} 
&= \Prob{}{\bigcap_{i=1}^{\log_{\beta} \log n} R_1^{(i)} \cond \mathcal{I}} 
\\
&= \prod_{i=1}^{\log_{\beta} \log n} \Prob{}{R_1^{(i)} \cond \bar{c}^{(i-1)} : \cap_{j=1}^{i-1} R_1^{(j)}, \mathcal{I}} 
\\
&\geq \prod_{i=1}^{\log_{\beta} \log n} (1 - e^{-2h^2(1+1/16)^{2i}/(32)^2})
\\
&= \prod_{i=1}^{\log_{\beta} \log n} (1 - e^{-\alpha\beta^{2i}}) 
\\
&= \exp{\left( \log \left( \prod_{i=1}^{\log_{\beta} \log n} (1 - e^{-\alpha\beta^{2i})} \right) \right)}
\\
&= \exp{\left( \sum_{i=1}^{\log_{\beta} \log n} \log(1 - e^{-\alpha\beta^{2i})}) \right)}
\\
&> \exp{\left( \sum_{i=1}^{\infty} \log(1 - e^{-\alpha\beta^{2i})}) \right)}
\\
&\stackrel{(a)}{=} \exp{\left(
    -\sum_{i=1}^\infty \left[ 
        e^{-\alpha\beta^{2i}}
        + \bigO(e^{-2\alpha\beta^{2i}})
    \right] 
\right)}
\\
&\stackrel{(b)}{>} \exp{\left(
    - \frac{1}{\alpha} \sum_{i=1}^\infty \left[ 
        \beta^{-2i}
        + \bigO(\beta^{-2i})
    \right]
\right)}
\\
&= \exp{\left( 
    -\frac{1}{\alpha} \left( \frac{1}{1-\beta^{-2}} \right) (1 + C)
\right)}
= e^{-\beta'},
\end{align*}
where in $(a)$ we expanded 
$\log(1 - x) = -x -\frac{x^2}{2} -\frac{x^3}{3} - \ldots$ 
using the Taylor series,
and in $(b)$ we used that
\[
e^{-\alpha \beta^{2i}} < \alpha^{-1} \beta^{-2i}
\implies
-\alpha \beta^{2i} < \log (\alpha^{-1} \beta^{-2i})
\implies
\alpha \beta^{2i} > \log (\alpha \beta^{2i}),
\]
which is always true for our values of $\alpha,\beta,i$ since we always have $\alpha\beta^{2i} > 0$;
the term $C$ appearing in the last bound is a constant due to the smaller order terms coming from the Taylor approximation. 

Note that the bias, starting from $\bigO(\sqrt{n})$,
reaches a value of $\bigO(\sqrt{n} \log n)$ after $\bigO(\log \log n)$ rounds. 
This implies that the bias reaches a value of at least $\sqrt{n} \log n$ within $\bigO(\log \log n)$ rounds with probability at least $e^{\beta'}$. 

In a completely symmetric fashion, the same holds for community $2$.
Now we compute the probability $\Prob{}{\mathcal{K}_1,\mathcal{K}_2 \cond \mathcal{I}}$ using the fact that, conditioning on the previous configuration, the events that a round is \emph{successful} w.r.t.\ community $1$ and community $2$ are independent.
\begin{align*}
\Prob{}{\mathcal{K}_1,\mathcal{K}_2 \cond \mathcal{I}}
&>\Prob{}{\bigcap_{i=1}^{\log_{\beta} \log n} R_1^{(i)} \wedge R_2^{(i)} \cond \mathcal{I}}
\\
&= \prod_{i=1}^{\log_{\beta} \log n} \Prob{}{R_1^{(i)} \wedge R_2{(i)} \cond  \bar{c}^{(i-1)} : \cap_{j=1}^{i-1} R_1^{(j)} \wedge R_2^{(j)}, \mathcal{I}}
\\
&= \prod_{i=1}^{\log_{\beta} \log n} \Prob{}{R_1^{(i)} \cond  \bar{c}^{(i-1)} : \cap_{j=1}^{i-1} R_1^{(j)} \wedge R_2^{(j)}, \mathcal{I}} \\
&\qquad \times \prod_{i=1}^{\log_{\beta} \log n} \Prob{}{R_2{(i)} \cond  \bar{c}^{(i-1)} : \cap_{j=1}^{i-1}  R_1^{(j)} \wedge R_2^{(j)}, \mathcal{I}}
\\
&= \Prob{}{\bigcap_{i=1}^{\log_{\beta} \log n} R_1^{(i)} \cond \mathcal{I}} \cdot \Prob{}{\bigcap_{i=1}^{\log_{\beta} \log n} R_2^{(i)} \cond \mathcal{I}}
\\
&\geq e^{-2\beta'} 
= \gamma_3.
\end{align*}
\end{proof}

\lemmaconvergence*
\begin{proof}
The proof has the following structure: We focus only on one of the biases and we show, in two different phases, that it will grow until it reaches the value $n-\log n$ within $\bigO(\log n)$ rounds, w.h.p. 
Then we apply the Union Bound and we show that this holds for both the biases, w.h.p. 

W.l.o.g.\ we assume that both the biases are positive. 
For any $i \in \{1,2\}$ we define $\tau'_i$ as the first round such that $s_i \geq \frac{n}{2}$ starting from a configuration such that $s_i \geq \sqrt{n} \log n$. 
Using Lemma~\ref{lemma:unconditional_bias_increase} and the hypotheses $s_i \geq \sqrt{n} \log n$ we get that, for each round $t$ such that $s^{(t)}_i \leq \frac{n}{2}$, it holds
\begin{align*}
&\Prob{}{s_i^{(t+1)} \geq (1 + 1/16)s_i \cond \bc{t}:s_i 
\geq \sqrt{n} \log n} 
\\
&\geq  1 - \exp{(-2(s_i)^2/(32^2 n))} 
\\
&\geq 1 - \exp^{-{\bigO(\log^2 n)}} 
> 1 - n^{-a_1}
\end{align*}
for any positive constant $a_1$. By iteratively applying the Union Bound we get that w.h.p. we have $\bigO(\log n)$ consecutive rounds of this multiplicative grow and thus it holds
\[
\Prob{}{\tau'_i > b_2 \log n} \geq 1 - n^{-a_2},
\]
for two suitable positive constants $a_2,b_2$. Now we define, for any $i \in \{1,2\}$, $\tau_i$ as the first round such that $s_i \geq n - \log n$ starting from a configuration such that $s_i \geq \frac{n}{2}$. 
Let us focus on community 1. 
Our assumption on the sign of the biases implies that there the minority color is \textit{blue}. 
This, together with the fact that $s_1 \geq \frac{n}{2}$, implies $|B_1| \leq \frac{n}{4}$.

Consider a configuration such that $2\log n \leq |B_1| \leq \frac{n}{4}$. Then it holds
\begin{equation}
\Prob{}{|B_{1}^{(t+1)|} > |B_{1}^{(t)}| (1 - \frac{1}{5}) \cond \bc{t}} < n^{-\Omega(1)}. \label{eq:minority-decrease}
\end{equation}
Indeed, using that $\frac{s_1}{2n} \geq \frac{n}{2} \geq \frac{c_2^2}{\sqrt{n}} + \frac{c_1^2 |B_2|}{n |B_1|}$ and Lemma~\ref{lemma:expectation} we get 
\begin{align*}
\Ex{|B_{1}^{(t+1)|} \cond \bc{t}} 
&< |B_1|\left(1 - \frac{s_1}{2n} + \frac{c_2^2}{\sqrt{n}} + \frac{2c_1}{\sqrt{n}} \sqrt{\frac{|B_2|}{|B_1|}\left( \frac{1}{2} - 
\frac{s_1}{2n} + \frac{c_2^2}{\sqrt{n}} + \frac{c_1^2 |B_2|}{n |B_1|}\right)}\right)
\\
&< |B_1|\left(1 - \frac{s_1}{2n} + \frac{c_2^2}{\sqrt{n}} + \frac{2c_1}{\sqrt{n}} \sqrt{\frac{|B_2|}{|B_1|}\cdot \frac{1}{2}}\right)
\\
&\leq |B_1|\left(1 - \frac{1}{4} + \frac{c_2^2}{\sqrt{n}} + \frac{2c_1}{\sqrt{n}} \sqrt{\frac{n}{2\log n}\cdot \frac{1}{2}}\right)
\\
&= |B_1|\left(1 - \frac{1}{4} + \frac{c_2^2}{\sqrt{n}} + \frac{2c_1}{\sqrt{4\log n}}\right)
\\
&= |B_1|\left(1 - \frac{1}{4} + \frac{c_2^2}{\sqrt{n}} + \frac{c_1}{\sqrt{\log n}}\right)
\leq |B_1|\left(1 - \frac{1}{5}\right),
\end{align*}
where the last inequality holds for a sufficient large $n$. Then, using the multiplicative form of the Chernoff Bound, we get that:
\begin{align*}
    \Prob{}{B_{1}^{(t+1)} \geq (1 - \frac{1}{25})|B_1|}
    &=\Prob{}{B_{1}^{(t+1)} \geq (1 + \frac{1}{5})(1 - \frac{1}{5})|B_1|}
    \\
    &\leq e^{-(1 - \frac{1}{5})|B_1|/125}
    \\
    &\leq e^{-(1 - \frac{1}{5})2\log n /125}
    = n^{-\gamma_2},
\end{align*}
for a positive constant $\gamma_2$. Let $\tau''_i$ be the first round such that $s_i \geq n - \log n$, starting from a configuration such that $s_i \geq \frac{n}{2}$. Equation~\eqref{eq:minority-decrease} implies that, by an application of the Union Bound,
\[
\Prob{}{\tau''_i > b_3 \log n} \geq 1 - n^{-a_3}
\]
for two suitable positive constants $a_3,b_3$. Thus, if we define $\tau_i$ as the first round such that $s_i \geq n - \log n$, starting from a configuration such that $s_i \geq \sqrt{n} \log n$, we get
\begin{align*}
    &\Prob{}{\tau_1 > (b_2 + b_3)\log n \,\bigcup\, \tau_2 > (b_2 + b_3)\log n} 
    \\
    &< \Prob{}{\tau'_1 > b_2\log n \,\bigcup\, \tau''_1 > b_3\log n \,\bigcup\, \tau'_2 > b_2\log n \,\bigcup\, \tau''_2 > b_3\log n} 
    \\
    &< \Prob{}{\tau'_1 > b_2\log n} + \Prob{}{\tau''_1 > b_3\log n} + \Prob{}{\tau'_2 > b_2\log n} + \Prob{}{\tau''_2 > b_3\log n}
    \\
    &< n^{-a_4},
\end{align*}
for a suitable positive constant $a_4$.
\end{proof}

\lemmametastability*
\begin{proof}
Let us define $X_u$ as an indicator random variable such that $X_u = 1$ if node $u$ will support the minority color of community $i$ at the next round and $X_u = 0$ otherwise; 
let $p_u$ be the probability of having $X_u = 1$.
We approximate $\sum_{u \in V} X_u$ with a Poisson random variable using Le Cam's Theorem (Theorem~\ref{thm:le-cam}). 
Thanks to Le Cam's Theorem, if $\sum_{u \in V_i} p_u^2 \leq \frac{1}{n^\epsilon}$, for some positive constant $\epsilon$, then any result that holds on the Poisson random variable with high probability will hold with high probability also for $\sum_{u \in V_i} X_u$. 

\begin{claim}
It holds that $\sum_{u \in C_i} p_u^2 = \bigO\left(\frac{\log^3 n}{n}\right)$.
\label{claim:psquared}
\end{claim}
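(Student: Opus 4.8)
Assume without loss of generality that $i=1$ and that the majority color of community $1$ is \emph{red}; the hypothesis $|s_1| \ge n - \log n$ then gives $|B_1| \le \log n$, while the opposite sign of the biases (i.e.\ $s_1 s_2 < 0$) forces \emph{blue} to be the majority of community $2$, so $|B_2| \ge n - \log n$. I would split $V_1$ into the set $M_1$ of \emph{minority} (blue) nodes, with $|M_1| = |B_1| \le \log n$, and the set $J_1 = V_1 \setminus M_1$ of \emph{majority} (red) nodes. Using the notation $Z(x) = N(x)\cap Z$ from the proof of Lemma~\ref{lemma:expectation}, the \TwoChoices{} rule (Definition~\ref{def:2choices}) gives, for a majority node $x\in J_1$, $p_x = \left(|B(x)|/d\right)^2$ (it turns blue only if both sampled neighbors are blue), and for a minority node $x\in M_1$, $p_x = 1 - \left(|R(x)|/d\right)^2$ (it stays blue unless both sampled neighbors are red). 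The plan is to bound the two groups separately and show that the minority nodes dominate.

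For the minority nodes I would write $p_x = 1 - \left(1 - |B(x)|/d\right)^2 \le 2\,|B(x)|/d$, and bound $|B(x)| \le |B_1| + b \le \log n + b$, since a blue neighbor of $x$ is either one of the at most $\log n$ blue nodes of community $1$ or one of the $b$ cut-neighbors of $x$ in community $2$. As $\lambda = \bigO(n^{-1/4})$ forces $a \ge \sqrt n$ and hence $d \ge \sqrt n$, and $b/d = \bigO(n^{-1/2})$, this yields $p_x = \bigO(\log n/\sqrt n)$ and $p_x^2 = \bigO(\log^2 n/n)$. Summing over the at most $\log n$ nodes of $M_1$ produces $\sum_{x\in M_1} p_x^2 = \bigO(\log^3 n/n)$; this is the dominant contribution.

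For the majority nodes the naive per-node bound $|B(x)| \le \log n + b$ only gives $p_x^2 = \bigO(\log^4 n/n^2)$, which summed over the $\Theta(n)$ nodes of $J_1$ yields $\bigO(\log^4 n/n)$ --- a logarithmic factor too large. Instead I would use $\sum_{x\in J_1} p_x^2 \le \bigl(\max_{x\in J_1} p_x\bigr)\sum_{x\in J_1} p_x$. The maximum is $\bigO(\log^2 n/n)$ as above, and $\sum_{x\in J_1} p_x = \sum_{x\in J_1}\left(|B(x)|/d\right)^2 \le \sum_{x\in V_1}\left(|B(x)|/d\right)^2$ is exactly the quantity bounded inside the proof of Lemma~\ref{lemma:expectation}; plugging $|B_1|\le \log n$ and $|B_2|\le n$ into its three-term bound, the cut term $\frac{b^2}{d^2}|B_2| \le c_1^2 = \bigO(1)$ dominates. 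Hence $\sum_{x\in J_1} p_x^2 = \bigO(\log^2 n/n)$, which is subsumed by the minority estimate, and adding the two groups gives $\sum_{x\in V_1} p_x^2 = \bigO(\log^3 n/n)$.

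The main obstacle is precisely this majority sum: the straightforward estimate overshoots the target by a $\log$ factor, and recovering the $\log^3$ requires combining the $\bigO(1)$ bound on $\sum_{x}\left(|B(x)|/d\right)^2$ (itself driven by the cut contribution from the large blue majority of community $2$) with the pointwise $\bigO(\log^2 n/n)$ bound. A secondary point to check carefully is that $p_x$ for a minority node is genuinely small rather than close to $1$: this rests on almost all of $x$'s neighbors being red, which holds because both the within-community blue minority and the cut are small under the standing hypotheses $|B_1|\le\log n$ and $b/d=\bigO(n^{-1/2})$.
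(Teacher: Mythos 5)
Your proof is correct, and its skeleton coincides with the paper's: the same split of $V_i$ into minority and majority nodes, with the dominant minority contribution handled identically --- a pointwise bound $p_x \le 2(|B_1|+b)/d = \bigO(\log n/\sqrt{n})$, hence $p_x^2 = \bigO(\log^2 n/n)$, summed over at most $\log n$ minority nodes to get $\bigO(\log^3 n/n)$ (the paper's version of this estimate is $p_u \le (z_u+b)/(a+b)$ with $z_u \le \log n$, the count of minority neighbors). The genuine difference is in the majority term. The paper bounds it by an extremal counting argument: since the minority set has at most $\log n$ nodes, each with $a$ neighbors inside the community, at most $a$ majority nodes can be adjacent to all of them, while every remaining majority node can see the minority color only across the cut; splitting the sum accordingly (with fourth powers $((z_u+b)/d)^4$) yields the sharper bound $\bigO(1/n)$. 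You instead write $\sum_{x\in J_1} p_x^2 \le \bigl(\max_x p_x\bigr)\sum_x p_x$ and bound $\sum_x p_x \le \sum_{x\in V_1}\left(|B(x)|/d\right)^2 = \bigO(1)$ by recycling the three-term spectral/cut decomposition already established inside the proof of Lemma~\ref{lemma:expectation}, where indeed the cut term $\frac{b^2}{d^2}|B_2| \le c_1^2$ dominates; this gives $\bigO(\log^2 n/n)$, a $\log^2 n$ factor weaker than the paper on this term, but immaterial since the minority term dominates in both arguments. What your route buys is economy and robustness: no new combinatorial argument, only a quantity the paper has already bounded, plus the elementary inequality $\sum x_j^2 \le (\max_j x_j)\sum_j x_j$. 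What the paper's route buys is a tighter estimate of the majority contribution by purely elementary counting, independent of the spectral machinery. Both correctly conclude $\sum_{u} p_u^2 = \bigO(\log^3 n/n)$.
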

\begin{proof}
Let $\sigma_i^-$ be the set of nodes supporting the minority color of community $i$ and $\sigma_i^+$ be the set of nodes supporting the majority one.
Let $z_u$ be the number of neighbors of node $u$ belonging to $\sigma_i^-$, 
and $\bar{z}_u$ be the number of neighbors of $u$ belonging to $\sigma_i^+$. 
Notice that $|\sigma_i^-| = \log n$ and $|\sigma_i^+| = n - \log n$.
Thus
\[
\sum_{u \in C_i} p_u^2 
= \sum_{\substack{u \in C_i,\\u \in \sigma_i^+}} p_u^2 + \sum_{\substack{u \in C_i,\\u \in \sigma_i^-}} p_u^2.
\]
 Let us analyze the two terms separately.
As for the first term we have that
\begin{multline*}
\sum_{\substack{u \in C_i,\\u \in \sigma_i^+}} p_u^2
\leq \sum_{\substack{u \in C_i,\\u \in \sigma_i^+}} \left(\frac{z_u + b}{a+b}\right)^4
\stackrel{(a)}{\leq} a \left(\frac{\log n + b}{a+b}\right)^2 + (n - \log n - a) \left(\frac{b}{a+b}\right)^4
\\
= \frac{a \log^4 n + 4ab \log^3 n + 6ab^2\log^2 n + 4ab^3\log n + ab^4 
+ b^4n - b^4 \log n - ab^4}{(a+b)^4}
\\
\leq \frac{a \log^4 n + 4ab \log^3 n + 6ab^2\log^2 n + 4ab^3\log n
+ b^4n - b^4 \log n}{a^4} 
\\
= \frac{\log^4 n}{a^4} + \frac{4b \log^3 n}{a^4} + \frac{6b^2 \log^2 n}{a^4} + \frac{4b^3 \log n}{a^4} + \frac{b^4 n}{a^4} - \frac{b^4 \log n}{a^4}
\\
\stackrel{(b)}{\leq} \frac{\log^4 n}{n^2} + \frac{4\log^3 n}{n^2} + \frac{6\log^2 n}{n^2} + \frac{4\log^2 n}{n^2} + \frac{n}{n^2} - \frac{\log n}{n^2}
= \bigO\left(\frac{1}{n}\right)
\end{multline*}
where in $(a)$ we used that at most $a$ nodes can have all the $\log n$ nodes belonging to $\sigma_i^-$ as neighbors, and in $(b)$ that $b \geq 1$, $a \geq b \sqrt{n} \geq \sqrt{n}$ by hypothesis of Theorem~\ref{theorem:main}, and thus $\frac{b}{a} \leq \frac{1}{\sqrt{n}}$.

As for the second term we have that
\begin{align*}
\sum_{\substack{u \in C_i,\\u \in \sigma_i^-}} p_u^2
&\leq \sum_{\substack{u \in C_i,\\u \in \sigma_i^-}} \left(\frac{\bar{z}_u + b}{a+b}\right)^2
\leq \sum_{\substack{u \in C_i,\\u \in \sigma_i^-}} \left(\frac{\log n + b}{a+b}\right)^2
\\
&= \log n \left(\frac{\log n + b}{a+b}\right)^2
\leq \frac{\log^3 n}{a^2} + \frac{2b\log^2 n}{a^2} + \frac{b^2\log n}{a^2}
\\
&\stackrel{(a)}{\leq} \frac{\log^3 n}{n} + \frac{2\log n}{n} + \frac{\log n}{n} 
= \bigO\left(\frac{\log^3 n}{n}\right),
\end{align*}
where in $(a)$ we used again that $a \geq \sqrt{n}$.

Finally, by combining the two bounds together, we get
\[
\sum_{u \in C_i} p_u^2 = \bigO\left(\frac{1}{n}\right) + \bigO\left(\frac{\log^3 n}{n}\right) = \bigO\left(\frac{\log^3 n}{n}\right).
\]
\end{proof}

We now show that a Poisson$(\lambda)$ random variable is upper bounded by $\bigO\left(\frac{\log n}{\log \log n}\right)$ w.h.p.\ as long as $\lambda$ is constant w.r.t.\ $n$.

\begin{claim}\label{claim:poisson-approx}
Let $X\sim \text{Poisson}(\lambda)$ where $\lambda$ is a positive real number, that is 
\[
\Prob{}{X=i}=\frac{\lambda^{i}}{i!}e^{-\lambda}.
\]
If $t=c\log n/\log\log n$ for some constant $c>0$ and $\lambda$ is constant w.r.t.\ $n$, then
\[
\Prob{}{X>t}\leq n^{-c+o\left(1\right)}.
\]
\end{claim}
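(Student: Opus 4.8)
The plan is to bound the upper tail of the Poisson distribution through its moment generating function and then extract the stated exponent by a careful asymptotic expansion. Recall that for $X\sim\text{Poisson}(\lambda)$ one has $\Ex{e^{sX}}=e^{\lambda(e^s-1)}$. First I would apply Markov's inequality to $e^{sX}$ for $s>0$, obtaining $\Prob{}{X>t}\leq e^{-st}\,e^{\lambda(e^s-1)}$, and then optimize over $s$ by taking $e^s=t/\lambda$, which is the minimizer whenever $t>\lambda$; this holds for all large $n$ since $\lambda$ is constant while $t\to\infty$. This yields the classical Poisson--Chernoff tail bound
\[
\Prob{}{X>t}\leq e^{-\lambda}\left(\frac{e\lambda}{t}\right)^{t}.
\]

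Next I would take logarithms and substitute $t=c\log n/\log\log n$, giving $\log\Prob{}{X>t}\leq-\lambda+t\,(1+\log\lambda-\log t)$. Since $\lambda$ is constant in $n$, both the term $-\lambda$ and the term $t(1+\log\lambda)=\bigO(\log n/\log\log n)$ are of order $o(\log n)$, so the exponent is governed by the single remaining term $-t\log t$. The key estimate is $\log t=\log\log n-\log\log\log n+\log c=(1+o(1))\log\log n$, where I use that $\log\log\log n=o(\log\log n)$; consequently $t\log t=(1+o(1))\,c\log n$, and therefore $-t\log t=-(1+o(1))\,c\log n$.

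Combining these pieces gives $\log\Prob{}{X>t}\leq-(1+o(1))\,c\log n+o(\log n)=-c\log n+o(\log n)$, which exponentiates to $\Prob{}{X>t}\leq n^{-c+o(1)}$, as claimed. I expect the only delicate point to be the bookkeeping in the second step: one must track the second-order term $-\log\log\log n$ sitting inside $\log t$ and confirm that it is absorbed into the $o(1)$ correction of the final exponent rather than perturbing the leading constant $c$, which is exactly where the scale separation $t=\Theta(\log n/\log\log n)$ against $\log t=\Theta(\log\log n)$ does the work. An alternative route that sidesteps the moment generating function is to bound the tail sum $\sum_{i>t}\lambda^{i}/i!$ directly: for $i>t>\lambda$ consecutive terms shrink by a factor $\lambda/(i+1)<\lambda/t<1$, so the series is geometric-like and dominated by its first term $\lambda^{\lceil t\rceil}/\lceil t\rceil!$ up to a factor $(1-\lambda/t)^{-1}=1+o(1)$; applying Stirling's lower bound $\lceil t\rceil!\geq(t/e)^{t}$ then recovers the same $(e\lambda/t)^{t}$ estimate and hence the same conclusion.
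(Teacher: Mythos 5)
Your proposal is correct, and its main route differs from the paper's in how the key intermediate bound is obtained. The paper works directly on the tail sum: it factors out $\frac{\lambda^{t}}{t!}e^{-\lambda}$, bounds the remaining series $\sum_{i\geq 1}\lambda^{i}/\prod_{j=1}^{i}(t+j)$ by a geometric series with ratio $\lambda/t\leq 1/2$ (using $t\geq 2\lambda$, valid for large $n$), and then invokes Stirling's bound $t!\geq (t/e)^{t}$ to arrive at $\Prob{}{X>t}\leq\left(\frac{\lambda e}{t}\right)^{t}e^{-\lambda}$. You reach exactly the same expression via Markov's inequality applied to $e^{sX}$ with the optimizer $e^{s}=t/\lambda$ --- the classical Poisson--Chernoff bound --- so the two proofs coincide from that point on: both substitute $t=c\log n/\log\log n$ and rest on the same scale-separation estimate $t\log t=(1+o(1))\,c\log n$, with the $\log\log\log n$ correction absorbed into the $o(1)$ exponent, precisely as you flag. (Indeed, the ``alternative route'' you sketch at the end, geometric domination of the tail plus Stirling, \emph{is} the paper's proof.) What each approach buys: your MGF route avoids factorials and Stirling altogether, requires only the one-line optimization check $t>\lambda$, and generalizes immediately to any variable with a known moment generating function; the paper's route is more elementary and self-contained (no exponential moments, no calculus), keeping every constant explicit through a term-by-term comparison. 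Both are complete and rigorous for the claim as stated.
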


\begin{proof}
We have
\begin{align*}
\Prob{}{X>t} 
&=\sum_{i=t+1}^{\infty}\frac{\lambda^{i}}{i!}e^{-\lambda}
=\frac{\lambda^{t}}{t!}e^{-\lambda}\sum_{i=1}^{\infty}\frac{\lambda^{i}}{\prod_{j=1}^{i}\left(t+j\right)}
\leq\frac{\lambda^{t}}{t!}e^{-\lambda}\sum_{i=1}^{\infty}\frac{\lambda^{i}}{t^{i}}
\\
&\stackrel{\left(a\right)}{\leq}\frac{\lambda^{t}}{t!}e^{-\lambda}\sum_{i=1}^{\infty}2^{-i}
\\
&\stackrel{(b)}{\leq}\left(\frac{\lambda e}{t}\right)^{t}e^{-\lambda}
\\
&=\left(\frac{\lambda e}{c\frac{\log n}{\log\log n}}\right)^{c\frac{\log n}{\log\log n}}e^{-\lambda}
\\
&=\left(e^{\log\left(\lambda e\right)-\log c-\log\log n+\log\log\log n}\right)^{c\frac{\log n}{\log\log n}}e^{-\lambda}
\\
&=e^{-\lambda+c\frac{\log n}{\log\log n}\left(\log\frac{\lambda e}{c}-\log\log n+\log\log\log n\right)}
\\
&=e^{-c\log n\left(1-o\left(1\right)\right)}
=n^{-c+o\left(1\right)},
\end{align*}
where in $(a)$ we used $t\geq2\lambda$, and in $(b)$
we used Stirling's formula $t!\geq\left(\frac{t}{e}\right)^{t}$.
\end{proof}

As a last step, we show that $\lambda = \sum_{u \in V_i} p_u$ is bounded by a constant w.r.t.\ $n$.
\begin{claim}
Let $\lambda = \sum_{u \in V_i} p_u$ and $\sigma_i^-$ be the set of nodes supporting the minority color of community $i$ and $\sigma_i^+$ be the set of nodes supporting the majority one.
It holds that $\lambda = \bigO(1)$.
\end{claim}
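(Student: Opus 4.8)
The plan is to split the Poisson mean according to the current color of each node,
\[
\lambda = \sum_{u \in V_i} p_u = \sum_{u \in \sigma_i^+} p_u + \sum_{u \in \sigma_i^-} p_u ,
\]
and to bound the two contributions separately, reusing the per-node estimates from the proof of Claim~\ref{claim:psquared}. A majority node $u \in \sigma_i^+$ becomes minority-colored only if it samples two minority neighbors, so $p_u \leq \big(\frac{z_u+b}{a+b}\big)^2$, where $z_u$ is the number of neighbors of $u$ inside $\sigma_i^-$. A minority node $u \in \sigma_i^-$ remains minority-colored only if it fails to sample two majority neighbors; since $1-x^2 \leq 2(1-x)$ for $x\in[0,1]$, applied to the fraction $x$ of majority neighbors, this gives $p_u \leq 2\,\frac{z_u+b}{a+b} \leq 2\,\frac{\log n + b}{a+b}$, using $z_u \leq |\sigma_i^-| \leq \log n$.

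First I would dispose of the minority term. Since $|\sigma_i^-| \leq \log n$ and each such node contributes $\bigO\big(\frac{\log n + b}{a+b}\big)$, the hypotheses $a \geq \sqrt n$ and $b \leq a/\sqrt n$ of Theorem~\ref{theorem:main} yield
\[
\sum_{u \in \sigma_i^-} p_u = \bigO\left(\frac{\log^2 n}{a}\right) = \bigO\left(\frac{\log^2 n}{\sqrt n}\right) = o(1).
\]

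The core of the argument is the majority term $\sum_{u \in \sigma_i^+}\big(\frac{z_u+b}{a+b}\big)^2$. Using $(z_u+b)^2 \leq 2z_u^2 + 2b^2$, it suffices to control $\sum_{u \in \sigma_i^+} z_u^2$ and $(n-\log n)\,b^2$. For the first I would invoke the edge-counting bound $\sum_{u \in \sigma_i^+} z_u \leq a\,|\sigma_i^-| \leq a\log n$ — the number of intra-community edges between $\sigma_i^+$ and $\sigma_i^-$ is at most $a$ times the number of minority nodes — together with $z_u \leq \log n$, so that $\sum_{u} z_u^2 \leq \log n \sum_{u} z_u \leq a\log^2 n$. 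For the second, $b \leq a/\sqrt n$ gives $(n-\log n)b^2 \leq nb^2 \leq a^2$. Dividing by $(a+b)^2 \geq a^2$ produces
\[
\sum_{u \in \sigma_i^+} p_u \leq \frac{2a\log^2 n + 2a^2}{a^2} = \bigO\left(\frac{\log^2 n}{a}\right) + \bigO(1) = \bigO(1),
\]
and combining the two contributions gives $\lambda = \bigO(1) + o(1) = \bigO(1)$.

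The main obstacle is exactly the estimate of $\sum_{u \in \sigma_i^+} z_u^2$: the trivial bound $z_u \leq \log n$ for every majority node would only give $\sum_u z_u^2 \leq n\log^2 n$, which is far too large after dividing by $a^2$. It is the edge-counting inequality $\sum_u z_u \leq a\log n$, encoding that few majority nodes can have many minority neighbors, that rescues the bound. I would stress that this color-split accounting is genuinely sharper than Lemma~\ref{lemma:expectation}: there the survival probability of a minority node is crudely bounded by $\tfrac14$, which shows only that the minority roughly halves per round, whereas the present refinement drives the expected minority size all the way down to $\bigO(1)$ in a single step, which is what makes the Poisson approximation of Lemma~\ref{lemma:metastability} meaningful.
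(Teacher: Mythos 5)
Your proof is correct and takes essentially the same route as the paper: the same split of $\lambda$ over $\sigma_i^+$ and $\sigma_i^-$, the same per-node bounds $p_u \leq \bigl(\frac{z_u+b}{a+b}\bigr)^2$ for majority nodes and $p_u = \bigO\bigl(\frac{\log n + b}{a+b}\bigr)$ for minority nodes, and the same key edge-counting fact that $\sum_{u \in \sigma_i^+} z_u \leq a\log n$. The only difference is executional and harmless: where you use $(z_u+b)^2 \leq 2z_u^2 + 2b^2$ together with $z_u \leq \log n$ and the edge count, the paper bounds the majority sum by its extremal configuration ($a$ nodes with $z_u = \log n$, the rest with $z_u = 0$); your rendering is if anything slightly cleaner, and your explicit factor $2$ from $1-x^2 \leq 2(1-x)$ in the minority-node bound tidies up a step the paper states loosely.
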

\begin{proof}
Let $z_u$ be the number of neighbors of node $u$ belonging to $\sigma_i^-$, 
and $\bar{z}_u$ be the number of neighbors of $u$ belonging to $\sigma_i^+$. 
Notice that $|\sigma_i^-| = \log n$ and $|\sigma_i^+| = n - \log n$.
Thus
\[
\lambda = \sum_{u \in C_i} p_u 
= \sum_{\substack{u \in C_i,\\u \in \sigma_i^+}} p_u + \sum_{\substack{u \in C_i,\\u \in \sigma_i^-}} p_u.
\]

Let us analyze the two terms separately.
As for the first term, similarly to Claim \ref{claim:psquared}, we have that
\begin{align*}
\sum_{\substack{u \in C_i,\\u \in \sigma_i^+}} p_u
&\leq \sum_{\substack{u \in C_i,\\u \in \sigma_i^+}} \left(\frac{z_u + b}{a+b}\right)^2
\\
&\stackrel{(a)}{\leq} a \left(\frac{\log n + b}{a+b}\right)^2 + (n - \log n - a) \left(\frac{b}{a+b}\right)^2 
\\
&= \frac{a \log^2 n + 2ab \log n + ab^2 + b^2n - b^2 \log n - ab^2}{(a+b)^2}
\\
&\leq \frac{a \log^2 n + 2ab \log n + b^2n - b^2 \log n}{a^2} 
\\
&= \frac{\log^2 n}{a} + \frac{2b\log n}{a} + \frac{b^2 n}{a^2} - \frac{b^2 \log n}{a^2}
\\
&\stackrel{(b)}{\leq} \frac{\log^2 n}{\sqrt{n}} + \frac{2\log n}{\sqrt{n}} + \frac{n}{n} - \frac{\log n}{n}
\\
&= 1 + \frac{\log^2 n}{\sqrt{n}} + \frac{2\log n}{\sqrt{n}} - \frac{\log n}{n},
\end{align*}
where in $(a)$ we used that at most $a$ nodes can have all the $\log n$ nodes belonging to $\sigma_i^-$ as neighbors, and in $(b)$ that $b \geq 1$, $a \geq b \sqrt{n} \geq \sqrt{n}$ by hypothesis of Theorem~\ref{theorem:main}, and thus $\frac{b}{a} \leq \frac{1}{\sqrt{n}}$.

As for the second term we have that
\begin{multline*}
\sum_{\substack{u \in C_i,\\u \in \sigma_i^-}} p_u
\leq \sum_{\substack{u \in C_i,\\u \in \sigma_i^-}} \left(\frac{\bar{z}_u + b}{a+b}\right)
\leq \sum_{\substack{u \in C_i,\\u \in \sigma_i^-}} \left(\frac{\log n + b}{a+b}\right)
= \log n \left(\frac{\log n + b}{a+b}\right)
\\
\leq \frac{\log^2 n}{a} + \frac{\log n}{a}
\stackrel{(a)}{\leq} \frac{\log^2 n}{\sqrt{n}} + \frac{\log n}{\sqrt{n}},
\end{multline*}
where in $(a)$ we used again that $a \geq \sqrt{n}$.

Finally, by combining the two bounds together, we get
\begin{multline*}
\lambda = \sum_{u \in C_i} p_u 
= \sum_{\substack{u \in C_i,\\u \in \sigma_i^+}} p_u + \sum_{\substack{u \in C_i,\\u \in \sigma_i^-}} p_u
\\
\leq 1 + \frac{\log^2 n}{\sqrt{n}} + \frac{2\log n}{\sqrt{n}} - \frac{\log n}{n} + \frac{\log^2 n}{\sqrt{n}} + \frac{\log n}{\sqrt{n}}
\\
= 1 + \frac{2\log^2 n}{\sqrt{n}} + \frac{3\log n}{\sqrt{n}} - \frac{\log n}{n}
= 1 + o(1) = \bigO(1).
\end{multline*}
\end{proof}

 We showed that the number of wrongly colored nodes $\sum_{u \in V} X_u$ is well approximated by a Poisson random variable and such random variable, thanks to Claim~\ref{claim:poisson-approx}, will be $\bigO\left(\frac{\log n}{\log \log n}\right)$ w.h.p.
\end{proof}

\section{Omitted Proofs of Section~\ref{sec:csl}}

\lemmaconsensusbreakingsymmetry*
\begin{proof}
We are interested in bounding the hitting times $\tau_1$ and $\tau_2$ defined as, respectively, the first round such that $\vert s_1 \vert \geq \sqrt{n} \log n$ and the first round such that $\vert s_2 \vert \geq \sqrt{n} \log n$.
In order to bound one of the two hitting times we use Lemma~\ref{lemma:symmetrygeneric}, a general tool for Markov chains (see~\cite[Lemma 4.5]{clementi_tight_2017}).
Let $\Omega$ be the the configuration space of the process and $m = \sqrt{n} \log n$ the target value. We need to show that the following two properties hold for each $i \in \{1,2\}$:
\begin{itemize}
    \item For any positive constant $h$, there exists a positive constant $c_1 < 1$ such that for every $x \in \Omega: s_i < m$ we have
\[
\Prob{}{s_i^{(t+1)} < h\sqrt{n} \cond X_{t} = x} < c_1,
\]
    \item There exist two positive constants $\epsilon$ and $c_2$ such that for every $x \in \Omega: h\sqrt{n} \leq s_i < m$ we have
\[
\Prob{}{s_i^{(t+1)} < (1+\epsilon)s_i\cond X_{t} = x} < e^{-c_2 s_i^2/n}.
\]
\end{itemize}

As for the first point, its proof is analogous to the proof of Lemma~\ref{lemma:init} since it is a consequence of the Berry-Esseen Theorem and of the variance of the process. 
As for the second point, we already proved it in Lemma~\ref{lemma:unconditional_bias_increase}, for $\epsilon = \frac{1}{16}$ and $c_2 = \frac{2}{32^2}
$. 
Thus we can conclude that $\Prob{}{\tau_1 > a \log n} < n^{-b}$ for  two positive constants $a,b$. 
Using the Union Bound, it is immediate to show that both the hitting times are lower bounded by $a \log n$, w.h.p.:
\begin{multline*}
    \Prob{}{\tau_1 \leq a \log n, \tau_2 \leq a \log n} = 1 - \Prob{}{\tau_1 > a \log n \,\cup\, \tau_2 > a \log n} \\
    \geq 1 - \left[ \Prob{}{\tau_1 > a \log n} + \Prob{}{ \tau_2 > a \log n} \right] 
    = 1 - 2n^b.
\end{multline*}

Note that, once a bias has reached a value of at least $\sqrt{n} \log n$, by an application of Lemma~\ref{lemma:unconditional_bias_increase} and using the hypothesis $s_i \geq \sqrt{n}\log n$ and of the Union Bound, it follows that the bias remains above that value for $\Omega(\log n)$ rounds w.h.p. This means that the system reaches a configuration such that both the biases have value at least $\sqrt{n} \log n$ within $\bigO(\log n)$ rounds w.h.p.
\end{proof}

\section{Mathematical Tools}
\begin{corollary}[H\"older]\label{corollary:matrix-norm}
Given $M \in \mathbb{R}^{n \times m}$ it holds that
\(
	\Vert M \Vert_2 \leq \sqrt{\Vert M \Vert_1 \cdot \Vert M \Vert_\infty}.
\)
\end{corollary}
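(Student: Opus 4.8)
The plan is to bound the spectral norm through its variational characterization $\norm{M}_2 = \sup_{x \neq 0} \norm{Mx}_2 / \norm{x}_2$ and to estimate $\norm{Mx}_2^2$ row by row. The key idea is a weighted Cauchy--Schwarz split: for each row $i$ I factor every summand $m_{ij} x_j$ as $\sqrt{|m_{ij}|} \cdot (\sqrt{|m_{ij}|}\, x_j)$, so that one copy of $|m_{ij}|$ contributes to a plain row-sum while the other becomes a weight on $|x_j|^2$; this is exactly what makes the row-sum norm $\norm{M}_\infty$ and the column-sum norm $\norm{M}_1$ surface separately.

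Concretely, I would fix an arbitrary vector $x$ and apply Cauchy--Schwarz to the factorization above to obtain, for each $i$,
\[
\left| \sum_j m_{ij} x_j \right|^2
\leq \left( \sum_j |m_{ij}| \right) \left( \sum_j |m_{ij}|\, |x_j|^2 \right).
\]
Bounding the first factor by the maximum absolute row sum $\norm{M}_\infty$ then gives
\[
\left| \sum_j m_{ij} x_j \right|^2 \leq \norm{M}_\infty \sum_j |m_{ij}|\, |x_j|^2.
\]

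Next I would sum over $i$ and exchange the order of summation. The resulting inner sum $\sum_i |m_{ij}|$ is precisely a column sum, bounded by $\norm{M}_1$, which yields
\[
\norm{Mx}_2^2 \leq \norm{M}_\infty \sum_j |x_j|^2 \sum_i |m_{ij}| \leq \norm{M}_1 \norm{M}_\infty \norm{x}_2^2.
\]
Dividing by $\norm{x}_2^2$ and taking the supremum over $x \neq 0$ gives $\norm{M}_2 \leq \sqrt{\norm{M}_1 \norm{M}_\infty}$, as claimed.

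I do not expect a genuine obstacle here: the statement is the $p = 2$ instance of Riesz--Thorin interpolation, and the only mildly delicate point is selecting the right weighted factorization so that a single use of Cauchy--Schwarz cleanly separates the row-sum and column-sum contributions. One could instead invoke interpolation directly, but the elementary computation above is self-contained and avoids any appeal to measure-theoretic machinery.
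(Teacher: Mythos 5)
Your proof is correct, and it takes a genuinely different (and more careful) route than the paper. The paper works directly on the output vector: writing $m_k$ for the entries of $M\bx$ at a maximizing unit vector $\bx$, it applies the scalar H\"older inequality $\sum_k |m_k|^2 \leq (\sup_k |m_k|)\sum_k |m_k| = \norm{M\bx}_\infty\,\norm{M\bx}_1$ and then replaces these \emph{vector} norms by the \emph{matrix} norms $\norm{M}_\infty$ and $\norm{M}_1$ in a single equality. That identification is the weak point of the paper's argument: for $\norm{\bx}_2 = 1$ one indeed has $\norm{M\bx}_\infty \leq \norm{M}_\infty \norm{\bx}_\infty \leq \norm{M}_\infty$, but the analogous bound $\norm{M\bx}_1 \leq \norm{M}_1 \norm{\bx}_1$ only gives $\sqrt{n}\,\norm{M}_1$, since $\norm{\bx}_1$ can be as large as $\sqrt{n}$ for a unit $\ell_2$ vector; so the paper's final step is, as written, an unjustified abuse of notation rather than a proof. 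Your row-by-row weighted Cauchy--Schwarz split (the Schur test), followed by the exchange of summation so that the column sums appear, is precisely what makes $\norm{M}_\infty$ and $\norm{M}_1$ surface with the correct weights, and it closes the argument rigorously. In short, the paper's proof is shorter but has a gap that your decomposition avoids; the two approaches are related in spirit (both are H\"older-type arguments), but yours is the one that actually establishes the stated inequality.
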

\begin{proof}
We have that 
$\Vert M \Vert_2 := \sup_{\Vert \bx \Vert_2 = 1} \Vert M \bx \Vert_2$,
with $M\bx = (m_1, \ldots, m_n)^\intercal$.
Notice that
\[
	\Vert M \Vert_2^2 = \sum_{k=1}^{n} |m_k|^2
	= \sum_{k=1}^{n} \left(|m_k| \cdot |m_k|\right)
	\leq (\sup_{i} |m_i|) \cdot \sum_{k=1}^{n} |m_k|
	= \Vert M \Vert_\infty \cdot \Vert M \Vert_1.
\]
Thus, taking the square root, it follows that
$\Vert M \Vert_2 \leq \sqrt{\Vert M \Vert_1 \cdot \Vert M \Vert_\infty}$
for any vector $\bx$.
The proof is a special case of H\"older's inequality, with $p = 1$ and $q=\infty$.
\end{proof}

\begin{theorem}[Berry-Esseen]\label{thm:berry-esseen}
Let $X_1,\ldots,X_n$ be independent and identically distributed random variables with mean $\mu=0$, variance $\sigma^2 > 0$, and third absolute moment $\rho < \infty$.
Let $Y_n = \frac{1}{n}\sum_{i=1}^{n}X_i$; 
let $F_n$ be the cumulative distribution function of $\frac{Y_n\sqrt{n}}{\sigma}$;
let $\Phi$ the cumulative distribution function of the standard normal distribution.
Then, there exists a positive constant $C < 0.4748$ (see~\cite{shevtsova2014absolute} for details) such that, for all $x$ and for all $n$, 
\[
\vert F_n(x) - \Phi(x) \vert \leq \frac{C \rho}{\sigma^3 \sqrt{n}}.
\]
\end{theorem}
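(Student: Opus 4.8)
The plan is to prove this classical estimate by the Fourier-analytic route, through characteristic functions and Esseen's smoothing inequality. First I would reduce to the case $\sigma = 1$: replacing each $X_i$ by $X_i/\sigma$ rescales the summands so that $\mathbf{E}[X_1^2] = 1$, leaves $\Phi$ and the normalized sum $\frac{Y_n\sqrt n}{\sigma}$ unchanged, and turns the third absolute moment into $\beta := \rho/\sigma^3$; the target then reads $\sup_x |F_n(x) - \Phi(x)| \le C\beta/\sqrt n$. By the power-mean (Lyapunov) inequality $\beta \ge (\mathbf{E}[X_1^2])^{3/2} = 1$, a fact I would use to keep the Fourier cutoff bounded.

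The core tool is Esseen's smoothing inequality: for every $T > 0$,
\[
\sup_x |F_n(x) - \Phi(x)| \le \frac{1}{\pi}\int_{-T}^{T}\left|\frac{f_n(t) - g(t)}{t}\right|\,dt + \frac{24}{\pi T\sqrt{2\pi}},
\]
where $f_n$ and $g(t) = e^{-t^2/2}$ are the characteristic functions of $F_n$ and $\Phi$, and $1/\sqrt{2\pi}$ bounds the standard normal density. Writing $\varphi$ for the characteristic function of the normalized summand, independence gives $f_n(t) = \varphi(t/\sqrt n)^n$, and since $e^{-t^2/2} = (e^{-t^2/(2n)})^n$ the whole problem reduces to estimating $\varphi(s)^n - (e^{-s^2/2})^n$ at $s = t/\sqrt n$ and integrating the result against $dt/|t|$.

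For the pointwise estimate I would Taylor-expand $\varphi$ about the origin using the three moment hypotheses: from $\mathbf{E}[X_1] = 0$, $\mathbf{E}[X_1^2] = 1$ and $\mathbf{E}[|X_1|^3] = \beta$ one gets $|\varphi(s) - (1 - s^2/2)| \le \beta|s|^3/6$, and separately $|(1 - s^2/2) - e^{-s^2/2}| \le s^4/8$, hence $|\varphi(s) - e^{-s^2/2}| \le \beta|s|^3/6 + s^4/8$. To pass from one factor to the $n$-th power I would invoke the telescoping bound $|a^n - b^n| \le n|a - b|\max(|a|,|b|)^{n-1}$ together with the sub-Gaussian control $\max(|\varphi(s)|, e^{-s^2/2}) \le e^{-s^2/3}$, which holds for $|s| \le 1/\beta$; choosing $T \asymp \sqrt n/\beta$ keeps $s = t/\sqrt n$ in this regime throughout $|t| \le T$. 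After the factors of $n$ and $n^{-3/2}$ cancel, the integrand is $O(\beta\, t^2 e^{-c t^2}/\sqrt n)$ for some constant $c > 0$, so the integral is $O(\beta/\sqrt n)$; the boundary term is $O(1/T) = O(\beta/\sqrt n)$ as well, and tuning the constant in $T$ collapses both into $C\beta/\sqrt n$.

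The main obstacle is the interplay of the two error sources across the full integration range. The smoothing inequality forces a finite cutoff $T$, while the Taylor control of $\varphi$ is accurate only near the origin, so the delicate point is to pick $T \asymp \sqrt n/\beta$ large enough that the boundary term $1/T$ is $O(\beta/\sqrt n)$, yet small enough that $|\varphi(t/\sqrt n)|$ still enjoys the sub-Gaussian bound keeping the telescoped power from blowing up; controlling $|\varphi|$ where it need not be small and establishing the smoothing inequality itself are the steps requiring the most care. Obtaining the sharp explicit constant $C < 0.4748$ is a further refinement of this scheme, which I would defer to the cited optimization~\cite{shevtsova2014absolute}.
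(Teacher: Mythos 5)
The paper never proves this statement: Theorem~\ref{thm:berry-esseen} sits in the ``Mathematical Tools'' appendix as an imported classical result, with the sharp constant $C<0.4748$ attributed entirely to~\cite{shevtsova2014absolute}, so there is no internal proof against which to compare yours. Your sketch is the canonical Fourier-analytic (Esseen smoothing) argument, and its skeleton is sound: the reduction to $\sigma=1$ with $\beta=\rho/\sigma^3\ge 1$ by Lyapunov, the smoothing inequality with the normal-density factor $1/\sqrt{2\pi}$, the moment expansion $|\varphi(s)-e^{-s^2/2}|\le \beta|s|^3/6+s^4/8$, the telescoping bound $|a^n-b^n|\le n|a-b|\max(|a|,|b|)^{n-1}$, the estimate $|\varphi(s)|\le 1-s^2/2+\beta|s|^3/6\le e^{-s^2/3}$ for $|s|\le 1/\beta$, and the cutoff $T\asymp\sqrt{n}/\beta$ do combine exactly as you say to give $\sup_x|F_n(x)-\Phi(x)|\le C'\beta/\sqrt{n}$ for some universal $C'$ (your dropped term $|t|^3/(8n)$ in the integrand is indeed absorbed into $\beta t^2/\sqrt{n}$ on $|t|\le T$ since $\beta\ge1$). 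Three small caveats. First, the telescoped damping is $e^{-(n-1)s^2/3}$, which vanishes at $n=1$; that case must be disposed of directly, which is trivial once $C'\ge 1$ since $\sup_x|F_1(x)-\Phi(x)|\le 1\le C'\beta$. Second, the smoothing inequality itself is invoked as a black box; that is acceptable for a tools-appendix statement, but it is the one genuinely nontrivial lemma in the route. Third, and most importantly, this scheme yields only a crude universal constant (a careful accounting of your bounds gives something well above $1$, not $0.4748$); the stated sharp constant requires the substantially more delicate optimization of~\cite{shevtsova2014absolute}, which you correctly defer --- and which the paper itself also defers, since its only use of the theorem, in Lemma~\ref{lemma:init}, needs nothing more than the existence of some universal $C$. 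So your proposal proves a correct and, for this paper's purposes, fully sufficient version of the statement, matching the paper's treatment of it as an external citation rather than diverging from any proof of record.
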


\begin{theorem}[Le Cam]\label{thm:le-cam}
Let $X_1,\ldots,X_n$ be independent Bernoulli random variables and let $p_i$ the probability of having $X_i = 1$.
Let $\lambda = \sum_{i=1}^{n} p_i$ and let $Y = \sum_{i=1}^{n} X_i$ be a Poisson random variable.
Then
\[
\sum_{k=0}^{\infty} \left\vert \Prob{}{Y=k} - \frac{\lambda^k e^{-\lambda}}{k!} \right\vert 
<2 \sum_{i=1}^{n} p_i^2.
\]
\end{theorem}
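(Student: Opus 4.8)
The plan is to prove the bound by a \emph{coupling} argument, the classical route to Le~Cam's inequality. First I would reinterpret the quantity to be bounded: the left-hand side is exactly twice the total variation distance $d_{\mathrm{TV}}(Y, Z)$ between the law of $Y = \sum_{i=1}^n X_i$ and the Poisson law $Z$ of parameter $\lambda = \sum_{i=1}^n p_i$, since $d_{\mathrm{TV}}(Y,Z) = \tfrac12 \sum_{k} |\Prob{}{Y=k} - \Prob{}{Z=k}|$. Hence it suffices to prove $d_{\mathrm{TV}}(Y, Z) \leq \sum_{i=1}^n p_i^2$, and then multiply by two.

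The first step would be a single-coordinate estimate. For each $i$ I would compare a $\mathrm{Bernoulli}(p_i)$ law with a $\mathrm{Poisson}(p_i)$ law by splitting the mass functions over $k=0$, $k=1$, and $k \geq 2$. A short elementary computation gives
\[
d_{\mathrm{TV}}\big(\mathrm{Ber}(p_i), \mathrm{Pois}(p_i)\big) = p_i\,(1 - e^{-p_i}) \leq p_i^2,
\]
where the inequality uses $1 - e^{-p_i} \leq p_i$. By the maximal-coupling characterization of total variation distance, this produces, for each $i$, a coupling of a Bernoulli variable $X_i$ with a Poisson variable $Z_i$ such that $\Prob{}{X_i \neq Z_i} \leq p_i^2$.

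Next I would lift these couplings to the full sums. Choosing the couplings to be mutually independent across $i$, the family $(Z_1, \dots, Z_n)$ consists of independent Poisson variables, so by the additivity of independent Poisson laws the sum $Z := \sum_i Z_i$ is $\mathrm{Poisson}(\sum_i p_i) = \mathrm{Poisson}(\lambda)$, exactly the target law. Since the coupling is simultaneous, the event that the two sums differ is contained in the event that some coordinate disagrees, so a union bound yields
\[
\Prob{}{Y \neq Z} \leq \Prob{}{\exists\, i : X_i \neq Z_i} \leq \sum_{i=1}^n \Prob{}{X_i \neq Z_i} \leq \sum_{i=1}^n p_i^2.
\]
As $d_{\mathrm{TV}}(Y, Z) \leq \Prob{}{Y \neq Z}$ holds for any coupling, combining this with the opening observation closes the argument, giving $\sum_{k} |\Prob{}{Y=k} - \lambda^k e^{-\lambda}/k!| = 2\, d_{\mathrm{TV}}(Y,Z) \leq 2\sum_i p_i^2$.

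The main conceptual point—and the step most easily gotten wrong—is the \emph{subadditivity of total variation distance under convolution of independent laws}: one must justify that coupling each coordinate optimally, making these couplings independent, and then invoking the union bound is legitimate and produces the correct Poisson sum as the marginal of $Z$. The only genuinely computational part is the per-coordinate estimate $d_{\mathrm{TV}}(\mathrm{Ber}(p), \mathrm{Pois}(p)) \leq p^2$, which reduces to the elementary inequality $1 - e^{-p} \leq p$; everything else is bookkeeping with the coupling and the union bound.
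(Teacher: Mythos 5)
Your proof is correct, but there is nothing in the paper to compare it against: the paper states Le~Cam's theorem in its ``Mathematical Tools'' appendix as an imported classical result and never proves it. Your argument is the standard coupling proof of Le~Cam's inequality, and every step is sound: the per-coordinate computation $d_{\mathrm{TV}}\bigl(\mathrm{Ber}(p),\mathrm{Pois}(p)\bigr)=p(1-e^{-p})\le p^2$, the existence of a maximal coupling realizing the total variation distance in each coordinate, the additivity of independent Poisson laws giving $Z\sim\mathrm{Poisson}(\lambda)$, the union bound over the coordinate disagreement events, and the coupling inequality $d_{\mathrm{TV}}(Y,Z)\le\Prob{}{Y\ne Z}$. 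Two minor remarks. First, the statement asserts a strict inequality $<$ while your chain ends in $\le$; strictness actually follows from your own per-coordinate identity, since $1-e^{-p_i}<p_i$ strictly whenever $p_i>0$ (in the fully degenerate case $p_i=0$ for all $i$ both sides vanish and the strict inequality as stated fails, which is an imprecision of the paper's statement, not of your proof). Second, the paper's phrasing ``let $Y=\sum_{i=1}^n X_i$ be a Poisson random variable'' is a typo---$Y$ is the Bernoulli sum, and the Poisson law is the comparison distribution---and you interpreted it correctly.
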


\begin{lemma}[Lemma 4.5~\cite{clementi_tight_2017}]
\label{lemma:symmetrygeneric}
Let $\{X_{t}\}_{t\in \mathbb{N}}$ be a Markov Chain with finite state space $\Omega$ and let 
$f:\Omega\mapsto[0,n]$ be a function that maps states to integer values. 
Let $c_3$ be any positive constant
and let $m = c_3\sqrt{n}\log n$ be a target value.
Assume the following properties hold:
\begin{enumerate}
\item For any positive constant $h$, there exists a positive constant $c_1 < 1$ such that for any $x \in \Omega : f(x) < m$,
\[
\Prob{}{f(X_{t+1}) < h\sqrt{n} \cond X_{t} = x} < c_1.
\]

\item There exist two positive constants $\epsilon, c_2$ such that for any $x \in \Omega: h\sqrt{n} \leq f(x) < m$,
\[
\Prob{}{f(X_{t+1}) < (1+\epsilon)f(X_{t})\cond X_{t} = x} < e^{-c_2f(x)^2/n}.
\]
\end{enumerate}
Then the process reaches a state $x$ such that $f(x) \geq m$ within 
$\bigO(\log n)$ rounds, w.h.p.
\end{lemma}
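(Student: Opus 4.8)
The plan is to read the two hypotheses as governing two distinct regimes of the potential $f$, separated by the crossover value $f \approx \sqrt{n\log n}$ at which the tail bound in property~2 becomes polynomially small, and to prove the statement by a two-phase argument whose only hard part is the lower regime. First I would handle the easy \emph{high regime}: as soon as $f(X_t) \geq \sqrt{(2/c_2)\,n\log n}$, property~2 gives a per-round failure probability $e^{-c_2 f^2/n} \leq n^{-2}$ for the multiplicative step $f \mapsto (1+\epsilon)f$. Since reaching the target $m = c_3\sqrt n\log n$ from this crossover requires only $\log_{1+\epsilon}\bigl(c_3\sqrt{\log n}\bigr) = \bigO(\log\log n)$ successive multiplicative steps, a union bound over these $\bigO(\log\log n)$ rounds shows that, once the crossover is reached, the process climbs to $m$ within $\bigO(\log\log n)$ further rounds with high probability. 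The entire difficulty is therefore in reaching the crossover at all within the $\bigO(\log n)$ budget.

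For the \emph{low regime} I would combine the two properties. Property~1 guarantees a steady supply of ``launch-pad'' rounds: conditioned on any history with $f < m$, each round has probability at least $1-c_1$ of satisfying $f \geq h\sqrt n$, so the process returns to the launch pad in $\bigO(1)$ rounds with a light-tailed waiting time. From a launch-pad round, iterating property~2 and bounding the product $\prod_i\bigl(1-e^{-c_2 h^2(1+\epsilon)^{2i}}\bigr)$ from below by a positive constant --- exactly the convergent-product estimate already carried out in the proof of Lemma~\ref{lemma:clustered-breaking-symmetry} --- shows that a full climb from $h\sqrt n$ past the crossover succeeds with some constant probability $q>0$. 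A single attempt only yields constant success, so to boost this to high probability within $\bigO(\log n)$ rounds I would organize the dynamics into a sequence of near-independent launch attempts and exploit the fact that failed attempts are cheap.

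The key quantitative point is that the failure probability $e^{-c_2 f^2/n}$ decays doubly-exponentially along a climb, so the step at which a doomed attempt aborts is concentrated near the launch pad: the probability that an attempt survives to its $j$-th step and then fails is at most $e^{-c_2 h^2(1+\epsilon)^{2j}}$, whence each failed attempt lasts $\bigO(1)$ rounds in expectation with a light tail. Consequently one fits $\Omega(\log n)$ launch attempts inside $\bigO(\log n)$ rounds; treating them as near-independent $\text{Bernoulli}(q)$ trials via the Markov property and a stochastic-domination argument, the probability that all of them fail is at most $(1-q)^{\Omega(\log n)} = n^{-\Omega(1)}$, while a Chernoff bound on the (light-tailed) attempt lengths keeps the total consumed rounds at $\bigO(\log n)$ with high probability. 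Intersecting this with the high-regime bound, and recalling $\bigO(\log n) + \bigO(\log\log n) = \bigO(\log n)$, yields the claim.

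I expect the main obstacle to be precisely this boosting step: turning the per-attempt constant success probability $q$ into a genuine $1-n^{-\Omega(1)}$ guarantee \emph{within only} $\bigO(\log n)$ rounds. Naively, a full successful climb costs $\bigO(\log\log n)$ rounds, which would permit only $\bigO(\log n/\log\log n)$ disjoint attempts and hence the insufficient bound $n^{-\Theta(1/\log\log n)}$; the fix --- and the delicate part to make rigorous --- is the observation that \emph{failed} attempts are $\bigO(1)$-cheap because of the doubly-exponential decay of the per-step failure probabilities, combined with a careful decoupling of the Markovian dependence between consecutive attempts so that the product bound $(1-q)^{\Omega(\log n)}$ is actually justified.
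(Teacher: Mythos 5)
There is nothing in the paper to compare your attempt against: the paper never proves Lemma~\ref{lemma:symmetrygeneric}. It is stated in the ``Mathematical Tools'' appendix purely as a black box imported from~\cite[Lemma~4.5]{clementi_tight_2017}, and is invoked exactly once, to establish Lemma~\ref{lemma:consensus-breaking-symmetry}. So your proposal has to be judged on its own merits, and on those it stands. It follows the same launch-and-climb scheme used throughout this line of work --- including the paper's own proof of Lemma~\ref{lemma:clustered-breaking-symmetry}, which carries out exactly your convergent-product estimate $\prod_{i} \bigl(1 - e^{-c_2 h^2 (1+\epsilon)^{2i}}\bigr) \geq q > 0$. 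Crucially, you put your finger on the genuine difficulty: packing only $\bigO(\log n/\log\log n)$ disjoint attempts of length $\bigO(\log\log n)$ into the round budget would give failure probability $n^{-\Theta(1/\log\log n)}$, which does not meet the paper's definition of w.h.p.\ ($1-\bigO(n^{-\gamma})$ for constant $\gamma>0$). Your fix is the correct one: since the per-level failure probability $e^{-c_2 h^2 (1+\epsilon)^{2j}}$ decays doubly exponentially in the level $j$, failures concentrate near the launch pad, so failed attempts are cheap, $\Omega(\log n)$ attempts fit into $\bigO(\log n)$ rounds, and iterated conditioning on the Markov property yields the product bound $(1-q)^{\Omega(\log n)} = n^{-\Omega(1)}$.

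Two points should be tightened in a full write-up. First, do not bound attempt lengths by an expectation \emph{conditioned on failure}: that conditional expectation is the ratio $\Ex{L \cdot \mathds{1}_{\text{fail}}}/\Prob{}{\text{fail}}$ and degrades when failure is rare; what the concentration step actually needs is the joint tail $\Prob{}{\text{fail} \wedge L \geq j} \leq \sum_{l \geq j} e^{-c_2 h^2 (1+\epsilon)^{2l}} = \bigO\bigl(e^{-c_2 h^2 (1+\epsilon)^{2j}}\bigr)$, which is precisely what your survival bound provides, and which supports the stochastic domination by i.i.d.\ light-tailed variables and the ensuing Chernoff-type bound on the total time. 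Second, pin down the quantifier interplay between the two hypotheses, which the lemma statement leaves implicit: fix $h$ once, let property~1 supply $c_1$ for that $h$, and read property~2 with the same $h$; note also that an attempt is only well defined while $f < m$, which is exactly the regime in which both hypotheses apply, so the bookkeeping closes.
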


\begin{theorem}[Chernoff -- Additive]\label{thm:chernoff-add}
Let $X_1,\ldots,X_n$ be independent Bernoulli random variables, let $X = \sum_{i=1}^{n}X_i$, and let $\Ex{X} = \mu$. Then:
\[
\begin{array}{ll}
    \Prob{}{X \leq \mu - \lambda} \leq e^{-2\lambda^2/n},
    & 0 < \lambda < n-\mu; 
    \\
    \Prob{}{X \geq \mu + \lambda} \leq e^{-2\lambda^2/n},
    & 0 < \lambda < \mu.
\end{array}
\]
\end{theorem}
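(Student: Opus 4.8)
The plan is to prove this via the standard exponential-moment (Chernoff) method, specialized to the bounded-range setting, which yields the sub-Gaussian tail with the sharp constant $2$ in the exponent. I would prove the upper-tail bound $\Prob{}{X \geq \mu + \lambda} \leq e^{-2\lambda^2/n}$ in full and then recover the lower tail by a symmetry argument.

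First I would fix an arbitrary $s > 0$ and apply Markov's inequality to the nonnegative random variable $e^{sX}$:
\[
\Prob{}{X \geq \mu + \lambda} = \Prob{}{e^{sX} \geq e^{s(\mu+\lambda)}} \leq e^{-s(\mu+\lambda)}\, \Ex{e^{sX}}.
\]
Because the $X_i$ are independent, the moment generating function factorizes as $\Ex{e^{sX}} = \prod_{i=1}^{n} \Ex{e^{s X_i}}$, so it suffices to control each factor separately. Writing $p_i = \Ex{X_i}$ and centering each variable, the key ingredient is Hoeffding's lemma: for any random variable $Y$ taking values in $[0,1]$ with $\Ex{Y}=0$, one has $\Ex{e^{sY}} \leq e^{s^2/8}$. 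Applying this to $Y = X_i - p_i$ gives $\Ex{e^{s(X_i - p_i)}} \leq e^{s^2/8}$, and multiplying over $i$ yields $\Ex{e^{s(X-\mu)}} \leq e^{n s^2/8}$.

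Combining the two displays, for every $s > 0$ I obtain $\Prob{}{X \geq \mu + \lambda} \leq \exp\!\left(-s\lambda + n s^2/8\right)$. I would then optimize the exponent over $s$: the quadratic $-s\lambda + n s^2/8$ is minimized at $s = 4\lambda/n$, where it attains the value $-2\lambda^2/n$, giving the claimed bound. The lower-tail inequality follows by applying the identical argument to the variables $1 - X_i$, whose sum has mean $n - \mu$; a downward deviation of $X$ below $\mu - \lambda$ is exactly an upward deviation of $\sum_i (1 - X_i)$ above its mean by $\lambda$, and the constraint $0 < \lambda < n - \mu$ on the upper tail transcribes to $0 < \lambda < n - \mu$ on the lower tail of $X$ as stated.

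The only nontrivial step is Hoeffding's lemma, which I expect to be the main obstacle and which I would establish by a tilted-variance argument. Setting $f(s) = \log \Ex{e^{sY}}$, one checks that $f(0) = 0$, that $f'(0) = \Ex{Y} = 0$, and that $f''(s) = \operatorname{Var}_{s}(Y)$, where the variance is computed under the exponentially tilted measure with density proportional to $e^{sY}$. Since $Y$ is supported on an interval of length $1$, its variance under any distribution is at most $1/4$ by Popoviciu's inequality, so $f''(s) \leq 1/4$ for all $s$. A second-order Taylor expansion then gives $f(s) \leq \tfrac{1}{2}\cdot\tfrac{1}{4}\, s^2 = s^2/8$, which is precisely the lemma. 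Everything else is the routine Markov-plus-optimization computation sketched above.
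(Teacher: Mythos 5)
Your proof is correct. Note first that the paper itself does not prove this statement: it is listed in the Mathematical Tools appendix as a classical result (the additive Chernoff--Hoeffding bound, invoked in the main text via the standard concentration-of-measure literature), so there is no in-paper argument to compare against. Your exponential-moment route --- Markov's inequality applied to $e^{sX}$, factorization by independence, Hoeffding's lemma for each centered summand, and optimization at $s = 4\lambda/n$ yielding the exponent $-2\lambda^2/n$ --- is exactly the standard textbook proof and delivers the sharp constant $2$; the reduction of the lower tail to the upper tail via $1 - X_i$ is also correct, and your transcription of the ranges of $\lambda$ is consistent with the (somewhat unusual) conventions in the statement, which in any case are immaterial since both bounds hold vacuously outside those ranges. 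One small slip worth fixing: you state Hoeffding's lemma for ``$Y$ taking values in $[0,1]$ with $\Ex{Y}=0$,'' which as written forces $Y \equiv 0$; the correct hypothesis, which you in fact use both in the application (where $Y = X_i - p_i \in [-p_i,\, 1-p_i]$) and in your tilted-variance proof of the lemma (``supported on an interval of length $1$,'' so the tilted variance is at most $1/4$ by Popoviciu), is that $Y$ has mean zero and range contained in an interval of length $1$. With that hypothesis corrected, the Taylor bound $\log \Ex{e^{sY}} \leq s^2/8$ and everything downstream is sound.
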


\begin{theorem}[Chernoff -- Multiplicative]\label{thm:chernoff-mult}
Let $X_1,\ldots,X_n$ be independent Bernoulli random variables, let $X = \sum_{i=1}^{n}X_i$, and let $\Ex{X} = \mu$. Then:
\[
\begin{array}{ll}
    \Prob{}{X \leq (1 - \delta) \mu} \leq e^{-\delta^2\mu/2},
    & 0 \leq \delta \leq 1; 
    \\
    \Prob{}{X \geq (1 + \delta) \mu} \leq e^{-\delta^2\mu/3},
    & 0 \leq \delta \leq 1.
\end{array}
\]
\end{theorem}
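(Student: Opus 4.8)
The plan is to prove both tail bounds by the standard exponential moment (Chernoff) method, handling the upper and lower deviations separately but from a single moment generating function estimate. Writing $p_i = \Prob{}{X_i = 1}$, I would first record the one-variable bound $\Ex{e^{tX_i}} = 1 + p_i(e^t - 1) \leq e^{p_i(e^t - 1)}$, where the last step uses the elementary inequality $1 + x \leq e^x$. Since the $X_i$ are independent, the moment generating function of $X$ factorises, and using $\sum_{i=1}^n p_i = \mu$ this gives the master estimate $\Ex{e^{tX}} = \prod_{i=1}^n \Ex{e^{tX_i}} \leq e^{(e^t - 1)\mu}$, valid for every real $t$. This single inequality is the engine for both parts of the statement.

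For the upper tail I would apply Markov's inequality to $e^{tX}$ with $t > 0$: for every such $t$,
\[
\Prob{}{X \geq (1+\delta)\mu} = \Prob{}{e^{tX} \geq e^{t(1+\delta)\mu}} \leq e^{-t(1+\delta)\mu}\,\Ex{e^{tX}} \leq e^{\mu\left((e^t-1) - t(1+\delta)\right)}.
\]
Minimising the exponent over $t$ leads to the choice $t = \ln(1+\delta)$ and hence to the classical form $\bigl(e^{\delta}/(1+\delta)^{1+\delta}\bigr)^{\mu}$. Symmetrically, for the lower tail I would apply Markov's inequality to $e^{-tX}$ with $t > 0$, invoke the same master estimate with $t$ replaced by $-t$, and optimise at $t = -\ln(1-\delta)$ (which is where the constraint $\delta < 1$ keeps the logarithm finite), obtaining $\bigl(e^{-\delta}/(1-\delta)^{1-\delta}\bigr)^{\mu}$.

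The remaining step, and the one I expect to be the genuine obstacle, is to simplify these two exact expressions into the stated $e^{-\delta^2\mu/3}$ and $e^{-\delta^2\mu/2}$ forms; the preceding machinery is routine, but pinning down the constants $3$ and $2$ requires careful calculus on $[0,1]$. For the upper tail this reduces to proving $\delta - (1+\delta)\ln(1+\delta) \leq -\delta^2/3$, equivalently $(1+\delta)\ln(1+\delta) - \delta \geq \delta^2/3$; I would establish it by writing the left-hand side as the integral $\int_0^\delta \ln(1+s)\,ds = \sum_{k\geq 2}(-1)^{k}\delta^{k}/(k(k-1))$ and bounding this \emph{alternating} series from below by $\delta^2/3$ on the whole interval, which is exactly where $\delta \leq 1$ is needed. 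For the lower tail the analogous target $-\delta - (1-\delta)\ln(1-\delta) \leq -\delta^2/2$ is cleaner: the corresponding integral $\int_0^\delta -\ln(1-s)\,ds = \sum_{k\geq 2}\delta^{k}/(k(k-1))$ has only nonnegative terms, so it dominates its first term $\delta^2/2$ immediately, giving the bound with no further estimation. Thus the delicate part is confined to the alternating-series estimate controlling the constant $3$ in the upper tail.
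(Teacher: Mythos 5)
Your proposal is correct, but there is nothing in the paper to compare it against: Theorem~\ref{thm:chernoff-mult} is stated in the ``Mathematical Tools'' appendix without proof, as a standard result imported from the literature (the main text invokes it via the citation to~\cite{dubhashi2009concentration}). Your exponential-moment derivation is the canonical one and is complete: the master estimate $\Ex{e^{tX}} \leq e^{(e^t-1)\mu}$ is valid for all real $t$, the optimizations at $t = \ln(1+\delta)$ and $t = -\ln(1-\delta)$ yield the exact forms $\bigl(e^{\delta}/(1+\delta)^{1+\delta}\bigr)^{\mu}$ and $\bigl(e^{-\delta}/(1-\delta)^{1-\delta}\bigr)^{\mu}$, and you correctly identify the only delicate step, namely the inequality $(1+\delta)\ln(1+\delta)-\delta \geq \delta^2/3$ on $[0,1]$. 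Your alternating-series argument for it is sound: for $\delta \leq 1$ the terms $\delta^k/(k(k-1))$ are decreasing, so the series is bounded below by its first two terms, $\delta^2/2 - \delta^3/6 \geq \delta^2/2 - \delta^2/6 = \delta^2/3$, which is exactly where $\delta \leq 1$ enters; the lower-tail series has nonnegative terms and dominates $\delta^2/2$ as you say. The only point worth tidying is the boundary case $\delta = 1$ of the lower tail, where your optimal $t = -\ln(1-\delta)$ is infinite; this is handled either by letting $t \to \infty$ in the pre-optimized bound, giving $\Prob{}{X \leq 0} \leq e^{-\mu} \leq e^{-\mu/2}$, or by continuity of both sides in $\delta$. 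Note also that your argument covers heterogeneous Bernoullis (distinct $p_i$), matching the generality of the statement, since only $\sum_i p_i = \mu$ is used.
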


\end{document}